\documentclass[lettersize,journal]{IEEEtran}

\usepackage{amsmath,amsfonts}
\usepackage{amssymb,mathtools}
\usepackage{graphicx}
\usepackage{cite}
\usepackage{booktabs}
\usepackage[linesnumbered,ruled,vlined]{algorithm2e}
\usepackage{array}
\usepackage[caption=false,font=normalsize,labelfont=sf,textfont=sf]{subfig}
\usepackage{textcomp}
\usepackage{stfloats}
\usepackage{placeins}
\usepackage{url}
\usepackage[hidelinks]{hyperref}
\hypersetup{
    pdftitle={Reliability- and Connectivity-Constrained Age-of-Information Optimization for UAV Swarm IoT Data Collection},
    pdfauthor={Milad Bafarassat and Sinem Coleri},
    pdfsubject={IEEE Internet of Things Journal manuscript},
    pdfkeywords={UAV swarm, Internet of Things, age of information, finite-blocklength communications, successive convex approximation}
}
\makeatother
\usepackage{orcidlink}
\usepackage{balance}
\usepackage{xcolor}
\usepackage{stfloats}
\newtheorem{proposition}{Proposition}
\newtheorem{lemma}{Lemma}

\newcommand{\rev}[1]{\textcolor{black}{#1}}
\newcommand{\czo}{c_0} 

\newcommand{\vn}{\mathbf{v}_n}
\newcommand{\qn}{\mathbf{q}_n}
\newcommand{\Lbar}{\bar{\mathbf{L}}}
\newcommand{\vbar}{\bar{\mathbf{v}}}

\begin{document}

\title{Reliability- and Connectivity-Constrained Age-of-Information Optimization for UAV Swarm IoT Data Collection}

\author{Milad~Bafarassat\orcidlink{0009-0004-4461-9347}
and~Sinem~Coleri\orcidlink{0000-0002-7502-3122},~\IEEEmembership{Fellow,~IEEE}%
\thanks{M.~Bafarassat and S.~Coleri are with the Wireless Networks Laboratory, Department of Electrical and Electronics Engineering, Ko\c{c} University, Istanbul, T\"urkiye (e-mail: mbafarassat26@ku.edu.tr; scoleri@ku.edu.tr).}%
}

\markboth{IEEE Internet of Things Journal}%
{Bafarassat and Coleri: Freshness, Reliability and Connectivity Optimization for UAV Swarms}

\maketitle

\begin{abstract}
Uncrewed aerial vehicle~(UAV) swarms provide a flexible platform for Internet of Things~(IoT) data collection by gathering delay-sensitive measurements from distributed sensor clusters and relaying them to a fusion center~(FC). \rev{Between successive updates, the FC estimates the current value of each monitored quantity; however, the estimation accuracy degrades as the last update ages.} Such missions require jointly optimizing \rev{estimation accuracy}, short-packet communication reliability, and swarm connectivity, yet existing approaches typically prioritize freshness while neglecting its impact on connectivity. This paper proposes a topology-coupled urgency~(TCU) scheduler \rev{that minimizes the FC's accumulated estimation error while incorporating the swarm's expected algebraic connectivity into the scheduling objective. The resulting connectivity reward encourages idle UAVs to reposition so as to preserve swarm connectivity.} For each time slot, the resulting problem is a mixed-integer nonconvex program, which we solve by successive convex approximation\rev{. Specifically, the} binary assignments are relaxed\rev{, while the connectivity reward is represented through a linear matrix inequality on the expected Laplacian.} The finite-blocklength rate \rev{and} the inter-UAV link reliability are linearized \rev{in the same} convex subproblem, \rev{thereby preserving their coupling rather than separating them into} alternating blocks\rev{. The integer} assignments are \rev{subsequently} recovered by Hungarian matching and verified against the original nonconvex constraints, while a deficit \rev{mechanism ensures long-term cluster} coverage without \rev{requiring} explicit long-horizon planning. \rev{In simulations comprising more than $19{,}000$ transmission attempts, all delivered updates satisfy the reliability target. Against the best age-of-information (AoI) heuristic, the proposed scheduler matches mean AoI while reducing the FC's estimation error by $12\%$ and raising the delivery success rate by $25$ percentage points. An ablation study further confirms the role of the connectivity reward: removing it in a dispersed deployment degrades mean AoI by $4.8$ times.}
\end{abstract}

\begin{IEEEkeywords}
UAV swarm, Internet of Things, age of information, finite-blocklength communications, successive convex approximation
\end{IEEEkeywords}

\section{Introduction}
\label{sec:intro}
\IEEEPARstart{U}{ncrewed} aerial vehicle~(UAV) swarms provide flexible, on-demand wireless infrastructure for Internet of Things~(IoT) applications~\cite{Yuan2025LAWN,Mozaffari2019}. In a typical data-collection mission, a swarm collects delay-sensitive measurements from distributed ground sensors over short-packet uplinks and relays them to a fusion center~(FC), while inter-UAV links maintain swarm connectivity. \rev{Because sensors send updates intermittently, the FC must estimate the current value of each monitored process between consecutive successful deliveries. The quality of this estimate deteriorates as the most recently received update ages, at a rate governed by the dynamics of the underlying process. System performance is therefore determined by three tightly coupled factors: estimation accuracy,} short-packet communication reliability and swarm connectivity. Optimizing any one objective in isolation inevitably compromises the others. \rev{An age-of-information (AoI)-based scheduler treats sources with different dynamics as equally urgent at the same age and may therefore under-serve rapidly varying processes. A reliability-centric scheduler may direct UAVs toward favorable access links at the expense of inter-UAV connectivity, whereas a topology-centric policy may preserve the swarm graph while neglecting the clusters whose updates are most valuable.}

UAV-based IoT data collection spans three complementary lines of work~\cite{Wei2022Survey}\rev{: maximizing communication throughput or sensing coverage, improving energy efficiency or enabling wireless-powered data collection, and minimizing AoI to enhance information freshness.} Within the AoI-driven literature, UAV trajectories and sensor scheduling are jointly optimized through various methods: neural combinatorial optimization~\cite{Wu2022AI}, Transformer-based tour planning~\cite{ZhuAoI2023}, mobility-graph trajectories with mixing-time guarantees~\cite{Tripathi2019Graphs}, energy- and capacity-aware association~\cite{Gao2023AoI} and Lyapunov drift-plus-penalty decomposition solved by block coordinate descent~(BCD)~\cite{LongTVT2024}. Although these approaches differ in their optimization techniques, they share a common objective of minimizing AoI subject to mobility and energy constraints. Consequently, they generally assume homogeneous sensing priorities and overlook three factors that critically influence practical UAV-swarm operation: source-aware sampling, finite-blocklength (FBL) reliability of the sensor data links, and swarm connectivity.

These missing aspects have each received considerable attention, but largely in separate research communities. Remote-estimation studies show that the optimal sampling policy for an Ornstein--Uhlenbeck~(OU) process depends on source dynamics rather than AoI alone~\cite{Ornee2021}. \rev{Similarly, weighted}-AoI scheduling for correlated sources can achieve estimation performance within a constant factor of the optimum~\cite{Ramakanth2024}. \rev{In the setting considered in this paper, each sensor cluster monitors an OU process; hence, the value of serving a cluster is determined by the reduction in the FC's estimation error resulting from a successful update. At the physical layer,} the finite-blocklength~(FBL) normal approximation~\cite{Polyanskiy2010} enables accurate modeling of short-packet communication reliability and is applied to multi-UAV IoT data collection under imperfect channel state information (CSI)~\cite{Cheng2026}. On the network-topology side, algebraic connectivity~\cite{Fiedler1973,Chung1997} forms the basis of connectivity control in mobile robotic networks~\cite{Zavlanos2011}, while matrix concentration tools~\cite{Tropp2015} provide deterministic conditions for maintaining connectivity in random graphs. However, none combines source-aware sampling, finite-blocklength reliability and connectivity within a single UAV swarm scheduler.

\rev{To bridge this gap, we propose a topology-coupled urgency (TCU) scheduling framework that jointly minimizes the accumulated estimation error at the FC and preserves the connectivity of the UAV swarm.} 
The main contributions are summarized as follows:

\begin{itemize}
\item \rev{We propose, for the first time, a scheduler that jointly optimizes three coupled objectives: estimation accuracy, short-packet communication reliability and swarm connectivity. In each slot, the framework jointly determines the UAV-to-cluster assignments, UAV movements, transmission durations, transmit powers, and collected data volumes, subject to assignment, mobility, energy, communication-reliability, and topology constraints.}
\item \rev{We derive the urgency of each sensor cluster directly from the remote-estimation objective as the reduction in the FC's estimation error resulting from a successful delivery. Unlike conventional AoI-based priorities, this urgency metric accounts for both the staleness of the available information and the dynamics of the monitored process. We further incorporate a weighted connectivity reward based on the algebraic connectivity of the expected Laplacian. This reward encourages idle UAVs to reposition in support of swarm connectivity.}
\item \rev{We develop a successive convex approximation (SCA)-based solution to the resulting mixed-integer nonconvex optimization problem. At each SCA iteration, all continuous decision variables are optimized jointly through a single convex subproblem. The connectivity reward is represented exactly using a linear matrix inequality involving the expected Laplacian, while the FBL rate and inter-UAV link-reliability expressions are linearized within the same subproblem. Their common dependence on UAV positions is therefore retained rather than separated across alternating optimization blocks. Binary UAV--cluster assignments are subsequently recovered using Hungarian matching and verified against the original nonconvex constraints. In addition, a deficit-queue mechanism increases the priority of clusters that fall behind their service targets, thereby preventing persistent starvation without requiring explicit long-horizon planning.}
\item We demonstrate the effectiveness of the proposed framework through extensive simulations under \rev{three representative deployment geometries: a uniformly distributed deployment, a load-imbalanced deployment, and a widely dispersed deployment. All methods are evaluated under identical, reliability-certified physical-layer conditions}. Compared with alternating-block optimization, drift-plus-penalty and heuristic baselines, the proposed scheduler \rev{consistently improves estimation accuracy and delivery reliability while maintaining competitive mean AoI. An ablation study further confirms that the connectivity reward is essential for preserving both swarm connectivity and freshness in challenging deployments}.
\end{itemize}
The rest of the paper is organized as follows. Section~\ref{sec:system} describes the system model and assumptions. Section~\ref{sec:formulation} formulates the joint optimization problem. Section~\ref{sec:solution} presents the SCA decomposition and convex surrogates. Section~\ref{sec:results} reports simulation results and the complexity analysis. Section~\ref{sec:conclusion} concludes the paper and outlines future work.

\section{System Model and Assumptions}
\label{sec:system}

\begin{figure}[!t]
\centering
\includegraphics[width=1\columnwidth,height=6cm]{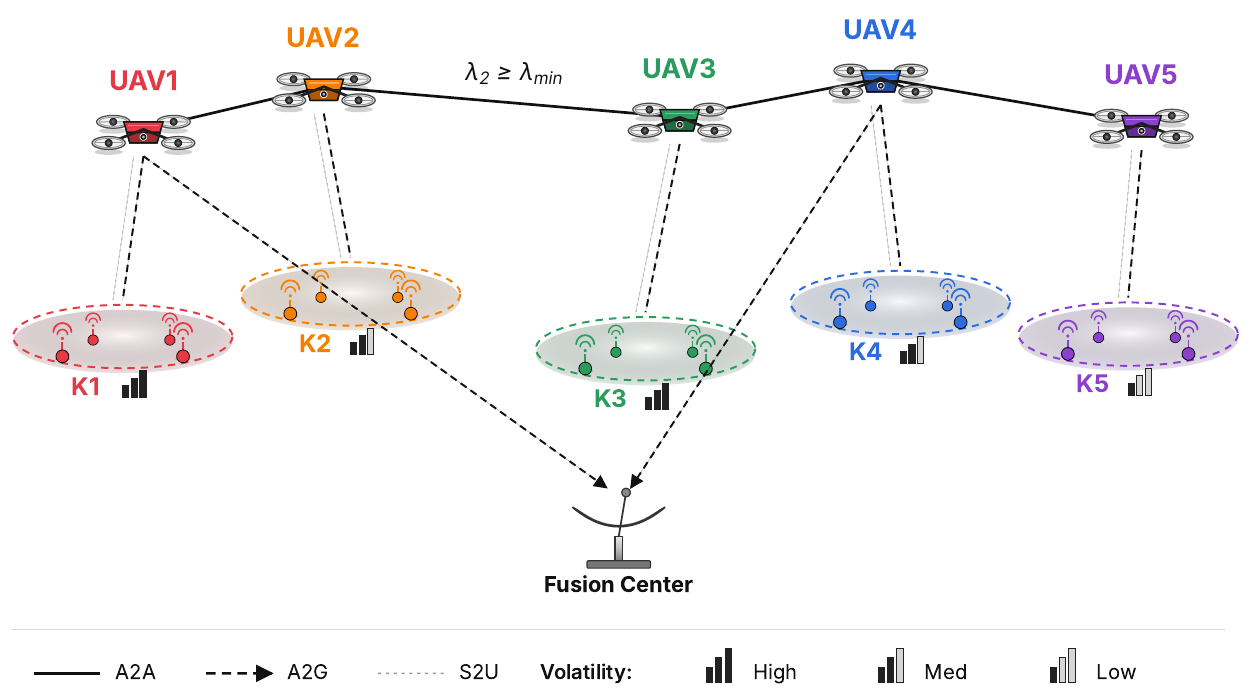}
\caption{UAV-assisted IoT data collection: $N$ UAVs collect from $K$ clusters over S2U uplinks and forward to the FC over A2G links; A2A links maintain connectivity.}
\label{fig:system}
\end{figure}

A swarm of $N$ rotary-wing UAVs collects data from $K$ ground IoT sensor clusters and forwards it directly to an FC, as illustrated in Fig.~\ref{fig:system}. The mission spans $T$ time slots of duration $\delta_t$ seconds. The UAVs operate under a Fly-to-Sense-and-Forward (FSF) protocol~\cite{LongTVT2024}: in slot~$t$, UAV~$n$ starts at position $\qn(t)\in\mathbb{R}^3$, moves at velocity $\vn(t)\in\mathbb{R}^3$, \rev{collects data from} its assigned cluster for $\tau_{\mathrm{s},n}(t)$ seconds and forwards \rev{the collected data to} the FC for $\tau_{\mathrm{f},n}(t)$ seconds, with $\tau_{\mathrm{s},n}(t)+\tau_{\mathrm{f},n}(t)\le\delta_t$. When active, its sensing and forwarding windows are floored at $\tau_s^{\min}$ and $\tau_f^{\min}$ to rule out zero-duration allocations. The slot far exceeds the channel coherence time, so large-scale path gains are evaluated at the end-of-slot position $\qn(t{+}1)=\qn(t)+\vn(t)\,\delta_t$, with small-scale fading and link-reliability statistics applied to these reference gains.

\subsection{Network Architecture and Channel Model}

{\color{blue}
\begin{table}[!t]
\centering
\caption{Main notation.}
\label{tab:notation}
\footnotesize
\renewcommand{\arraystretch}{1.05}
\setlength{\tabcolsep}{4pt}
\begin{tabular}{@{}ll@{}}
\toprule
Symbol & Meaning \\
\midrule
$N$, $K$, $T$, $\delta_t$ & UAVs, clusters, horizon (slots), slot length \\
$\qn(t)$, $\vn(t)$ & position and velocity of UAV~$n$ \\
$\alpha_{k,n}(t)$ & cluster--UAV assignment indicator \\
$S_k(t)$, $\widetilde S_k(t)$ & attempted and successful service of cluster~$k$ \\
$d_{k,n}(t)$ & delivered data volume (bits) \\
$\tau_{\mathrm{s},n}(t)$, $\tau_{\mathrm{f},n}(t)$ & sensing and forwarding durations \\
$P_k(t)$, $P^{\mathrm{relay}}(t)$ & sensor and relay transmit powers \\
$\hat h_{k,n}(t)$, $\rho_{k,n}(t)$ & pilot channel estimate, Jakes correlation \\
$\gamma_{k,n}^{\mathrm{est}}(t)$ & effective S2U SINR \\
\rev{$\varepsilon_{\max}$, $\varepsilon_{\mathrm{p}}$} & \rev{block-error target, per-packet budget} \\
$n_{k,n}(t)$\rev{, $n_{\mathrm{p}}$} & \rev{symbols per window, packet length} \\
$\Delta_k(t)$ & age of information of cluster~$k$ \\
$\sigma_{\xi,k}^2(\Delta)$ & OU conditional variance at age $\Delta$ \\
$g_k(t)$ & service gain of cluster~$k$, \eqref{eq:gain} \\
$Z_k(t)$, $\omega_k$ & deficit queue, target service rate \\
$B_k^{\min}(t)$, $B_0$ & rate--distortion and minimum payloads \\
$\tilde p_{nm}(t)$, $\mathcal E(t)$, $R_{\mathrm{c}}$ & A2A link reliability, candidate set, radius \\
$\Lbar(t)$, $\bar\lambda_2(t)$ & expected Laplacian, algebraic connectivity \\
$\vbar_2(t)$ & Fiedler vector of $\Lbar(t)$ \\
$\eta$, $\delta_c$, $\lambda_{\min}$ & per-edge slack, confidence, connectivity floor \\
$\mu$, $\lambda_Z$ & connectivity price, coverage-pressure weight \\
$E_n^{\mathrm{rem}}(t)$, $E_k^{\mathrm{s,rem}}(t)$ & remaining UAV and cluster energies \\
$W_{k,n}(t)$ & matching weight, \eqref{eq:assignment_weights} \\
\bottomrule
\end{tabular}
\end{table}
}

The network is single-hop: each UAV forwards \rev{the collected data directly} to the FC over the air-to-ground~(A2G) link \rev{within the same} slot, while air-to-air~(A2A) links \rev{support inter-UAV coordination and maintain swarm} connectivity. A single relay power $P^{\mathrm{relay}}(t)$ serves both A2A and A2G. The sensor transmit power $P_k(t)$ of cluster~$k$ and the aerial relay power are bounded by hardware and regulatory limits as $0 \leq P_k(t) \leq P_k^{\max}$ and $0 \leq P^{\mathrm{relay}}(t) \leq P^{\mathrm{relay}}_{\max}$.

Each cluster shares a common sensor-to-UAV~(S2U) sub-band and a single \rev{aggregate} energy budget. \rev{A cluster is successfully served only if its update is both decoded at the collecting UAV and forwarded to the FC within the same time slot.} We assume the cluster head (CH) of each sensor cluster is predetermined and is the only transmitter when the cluster is served. The CH selection algorithm, e.g., LEACH~\cite{Heinzelman2002}, is beyond the scope of this work.

We adopt a one-to-one assignment rule: per slot, each UAV serves at most one cluster and each cluster is assigned to at most one UAV, so each active UAV~$n$ delivers to the FC a single payload $d_{k,n}(t)$ in bits collected from cluster~$k$. Let \(\alpha_{k,n}(t)\in\{0,1\}\) indicate whether cluster~$k$ is assigned to UAV~$n$. The attempted-service indicator for cluster~$k$ is
$S_k(t)=\sum_{m=1}^{N}\alpha_{k,m}(t)$,
so at most $\min\{K,N\}$ S2U transmissions occur in parallel.

Let $h_{k,n}(t)$ denote the channel coefficient of cluster~$k$'s active CH toward UAV~$n$. The S2U signal-to-interference-plus-noise ratio (SINR) at UAV~$n$ is $\gamma_{k,n}^{\mathrm{S2U}}(t)=P_k(t)|h_{k,n}(t)|^2/[\sigma_n^2+\sum_{k'\neq k}S_{k'}(t)P_{k'}(t)|h_{k',n}(t)|^2]$, where $\sigma_n^2$ is the thermal noise at UAV~$n$ and the summation captures co-channel interference from other simultaneously transmitting clusters. The S2U channel is Rayleigh with distance-dependent mean power, $\mathbb{E}[|h_{k,n}(t)|^2]=\beta_0^{\mathrm{S2U}}\|\qn(t{+}1)-\mathbf{q}_k\|^{-\alpha_p^{\mathrm{S2U}}}$, where $\mathbf{q}_k$ is the cluster-head position; the ground uplink propagates through clutter and therefore carries its own reference gain $\beta_0^{\mathrm{S2U}}$ and exponent $\alpha_p^{\mathrm{S2U}}$.

A2A and A2G links operate on orthogonal, interference-free sub-bands (FDMA)~\cite{LiXu2018}, with signal-to-noise ratio~(SNR):
\begin{equation}
\begin{aligned}
\gamma_{nm}^{\mathrm{A2A}}(t) &= \frac{P^{\mathrm{relay}}(t)\,\beta_{nm}^{\mathrm{A2A}}(t)}{\sigma_m^2}, \\
\gamma_n^{\mathrm{A2G}}(t) &= \frac{P^{\mathrm{relay}}(t)\,\beta_{n,\mathrm{FC}}(t)}{\sigma_{\mathrm{FC}}^2},
\end{aligned}
\label{eq:A2A_A2G_SNR}
\end{equation}
where $\sigma_m^2$ and $\sigma_{\mathrm{FC}}^2$ are the noise powers at UAV~$m$ and the FC, and $\beta_{nm}^{\mathrm{A2A}}(t),\beta_{n,\mathrm{FC}}(t)$ the corresponding A2A and A2G path gains. Each noise power is the product of the noise spectral density $N_0$ and the corresponding sub-band bandwidth, $\sigma_n^2=N_0 B^{\mathrm{s}}$, $\sigma_m^2=N_0 B^{\mathrm{A2A}}$ and $\sigma_{\mathrm{FC}}^2=N_0 B^{\mathrm{A2G}}$.

Defining $\boldsymbol{\delta}_{nm}(t)=\qn(t{+}1)-\mathbf{q}_m(t{+}1)$, the path gains are modeled as~\cite{LiXu2018}
\begin{equation}
\begin{aligned}
\beta_{nm}^{\mathrm{A2A}}(t) &= \beta_0^{\mathrm{air}}\|\boldsymbol{\delta}_{nm}(t)\|^{-\alpha_p^{\mathrm{air}}}, \\
\beta_{n,\mathrm{FC}}(t) &= \beta_0^{\mathrm{air}}\|\qn(t{+}1)-\mathbf{q}_{\mathrm{FC}}\|^{-\alpha_p^{\mathrm{air}}},
\end{aligned}
\label{eq:path_gains}
\end{equation}
where $\mathbf{q}_{\mathrm{FC}}$ denotes the FC's fixed position and $\beta_0^{\mathrm{air}}$ and $\alpha_p^{\mathrm{air}}$ are the reference path gain at unit distance and the path-loss exponent of the aerial (A2A and A2G) links, distinct from the ground pair $(\beta_0^{\mathrm{S2U}},\alpha_p^{\mathrm{S2U}})$ of the S2U link.

UAV positions evolve as
\begin{equation}
\qn(t{+}1) = \qn(t) + \vn(t)\,\delta_t, \quad \forall\, n, t,
\label{eq:kinematics}
\end{equation}
with speed limit $\|\vn(t)\| \leq V_{\max}$ and altitude corridor $H_{\min} \leq [\qn(t{+}1)]_3 \leq H_{\max}$, where $[\cdot]_3$ is the vertical component.

\subsection{Sensor Data Model and Age of Information}

Each sensor in cluster~$k$ monitors a continuous-time OU process:
\begin{equation}
d\xi_k(t) = -\theta_k\,\xi_k(t)\,dt + \sigma_k\,dW_k(t), \quad \theta_k > 0,
\label{eq:OU}
\end{equation}
where $\xi_k(t)$ is the state, $\theta_k$ is the mean-reversion rate, $\sigma_k$ the diffusion coefficient and $W_k(t)$ is a standard Wiener process~\cite{Oksendal2003} independent across clusters. \rev{The OU process models physical quantities that fluctuate around an operating point under random disturbances, such as temperature, pressure or pollutant concentration, and it is the standard source model in remote estimation~\cite{Ornee2021}. Its conditional variance gives the FC's estimation error in closed form as a function of the AoI, so the urgency of a cluster follows directly from the estimation task. Because the OU parameters vary across clusters, the monitored processes have different levels of volatility and hence different update priorities.}

\rev{Let $\Delta_k(t)$ be the age of information (AoI) of cluster~$k$ at time slot~$t$, defined as the number of slots elapsed since its last delivered update.} Information freshness for cluster~$k$ is its AoI $\Delta_k(t)$, reset only by a successful end-to-end delivery: the S2U and A2G transmissions must both succeed in the same slot, captured by the successful-service indicator $\widetilde S_k(t)=\sum_{n=1}^{N}\alpha_{k,n}(t)\,\mathbf{1}[\mathcal{E}_{k,n}^{\mathrm{S2U}}(t)\cap\mathcal{E}_{n}^{\mathrm{A2G}}(t)]$, where $\mathcal{E}_{k,n}^{\mathrm{S2U}}(t)$ is the S2U decode event at UAV~$n$ and $\mathcal{E}_{n}^{\mathrm{A2G}}(t)$ the A2G forwarding event in the same slot, so $\widetilde S_k(t)\in\{0,1\}$ under the one-to-one rule. The AoI then evolves as
\begin{equation}
\Delta_k(t{+}1) =
\begin{cases}
1, & \widetilde S_k(t) = 1,\\
\Delta_k(t) + 1, & \text{otherwise,}
\end{cases}
\label{eq:AoI_update}
\end{equation}
with $\Delta_k(1) = \Delta_k^{(0)}$\rev{, where $\Delta_k^{(0)}$ is the given initial AoI, set to $1$ in all experiments}.

The minimum mean-square error~(MMSE) of predicting $\xi_k(t)$ from the last observation, $\Delta_k(t)$ slots ago, is $\sigma_{\xi,k}^2(\Delta_k(t)) = \bigl(\sigma_k^2/(2\theta_k)\bigr)\bigl(1 - e^{-2\theta_k\,\Delta_k(t)\,\delta_t}\bigr)$~\cite{Ornee2021}, which grows monotonically with $\Delta_k(t)$, saturating at $\sigma_k^2/(2\theta_k)$.

\rev{A successful service of cluster~$k$ resets its next-slot AoI to $\Delta_k(t{+}1)=1$. Conditioned on the state at slot~$t$, the expected next-slot estimation error is therefore}
\begin{align*}
\rev{\mathbb{E}\bigl[\sigma_{\xi,k}^2(\Delta_k(t{+}1))\bigr]} &\rev{{}= \sigma_{\xi,k}^2(\Delta_k(t){+}1)} \\
&\rev{{}\quad - \sum_{n=1}^{N}\alpha_{k,n}(t)\,p_{k,n}^{\mathrm{succ}}(t)\,g_k(t),}
\end{align*}
\rev{where $p_{k,n}^{\mathrm{succ}}(t)$ is the probability that \rev{an attempted service} of cluster~$k$ through UAV~$n$ succeeds and $g_k(t)$ is the \emph{service gain},}
\begin{align}
\rev{g_k(t)} &\rev{{}= \sigma_{\xi,k}^2(\Delta_k(t)+1) - \sigma_{\xi,k}^2(1)} \nonumber \\
&\rev{{}= \frac{\sigma_k^2}{2\theta_k}\Bigl[e^{-2\theta_k\delta_t} - e^{-2\theta_k(\Delta_k(t)+1)\delta_t}\Bigr].}
\label{eq:gain}
\end{align}
\rev{The service gain is \rev{the reduction in the next-slot estimation error produced by} one successful service of cluster~$k$. It is nonnegative, \rev{increases} with $\Delta_k(t)$ and saturates at $(\sigma_k^2/2\theta_k)\,e^{-2\theta_k\delta_t}$\rev{. Consequently, older updates and updates from more volatile sources are more valuable.} The expected per-slot \rev{estimation-error} reduction multiplies this gain by the success probability, $p_{k,n}^{\mathrm{succ}}(t)\,g_k(t)$. A transmission is attempted only when the planning model meets the nominal S2U error target $\varepsilon_{\max}$ and the A2G link to the FC is operated above its decoding threshold. For feasible assignments, the scheduler uses $p_{k,n}^{\mathrm{succ}}(t)\approx1$ as a planning approximation. When selecting assignments, the scheduler thus approximates the expected estimation-error reduction $p_{k,n}^{\mathrm{succ}}(t)g_k(t)$ by $g_k(t)$. The resulting overestimation factor is at most $1/p_{k,n}^{\mathrm{succ}}(t)$, so the approximation is accurate when the actual delivery probability is close to one.}

\begin{lemma}[Service monotonicity]
\label{lem:monotone}
\rev{Fix a cluster~$k$ and let the stage cost $c(\cdot)$ be nondecreasing. \rev{Consider} two policies \rev{under} the same realization of all randomness, so that a service attempt in a given slot has the same outcome under both \rev{policies}. If policy $A$ serves cluster~$k$ in every slot in which policy $B$ serves it, and possibly in more, then $\Delta_k^{A}(t) \le \Delta_k^{B}(t)$ for every $t$, and hence $\sum_{t} c(\Delta_k^{A}(t)) \le \sum_{t} c(\Delta_k^{B}(t))$.}
\end{lemma}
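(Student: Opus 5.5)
The argument is an induction on $t$ driven by the AoI recursion \eqref{eq:AoI_update}, followed by summing the stage cost. The statement is cleanest if we use a common initial condition $\Delta_k^{A}(1)=\Delta_k^{B}(1)=\Delta_k^{(0)}$. The coupling hypothesis matters in one specific way. Because both policies see the same realization of randomness, a service attempt of cluster~$k$ in slot $t$ under $B$ that succeeds (so $\widetilde S_k^{B}(t)=1$) is also attempted under $A$ with the same outcome, and hence $\widetilde S_k^{A}(t)=1$. In other words, $\widetilde S_k^{B}(t)\le \widetilde S_k^{A}(t)$ for every $t$. I will state this inclusion of success events explicitly first, since it is the only place the coupling assumption enters.

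The main step is to show $\Delta_k^{A}(t)\le\Delta_k^{B}(t)$ by induction. The base case is the common initial value. For the inductive step, assume $\Delta_k^{A}(t)\le\Delta_k^{B}(t)$ and consider three cases.
\begin{enumerate}
\item If $\widetilde S_k^{A}(t)=1$, then $\Delta_k^{A}(t{+}1)=1\le\Delta_k^{B}(t{+}1)$, because the AoI is always at least $1$.
\item If $\widetilde S_k^{A}(t)=0$, the success inclusion forces $\widetilde S_k^{B}(t)=0$. Then $\Delta_k^{A}(t{+}1)=\Delta_k^{A}(t)+1\le\Delta_k^{B}(t)+1=\Delta_k^{B}(t{+}1)$.
\item The remaining combination, $\widetilde S_k^{B}(t)=1$ with $\widetilde S_k^{A}(t)=0$, is excluded by the success inclusion.
\end{enumerate}
A cleaner compact form of the same step is to note that the update map $f(\Delta,s)=s\cdot 1+(1-s)(\Delta+1)$ is nondecreasing in $\Delta$ and nonincreasing in $s$, which I may use instead of the case split.

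The cost bound then follows term by term. Since $c$ is nondecreasing, $c(\Delta_k^{A}(t))\le c(\Delta_k^{B}(t))$ for each $t$, and summing over any horizon gives the claimed inequality. The same argument applies to $c=\sigma_{\xi,k}^2$, which the paper notes is monotone, so the result covers the estimation-error objective.

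The main obstacle is interpretive rather than technical: making precise what "same outcome" means when $A$ attempts in a slot where $B$ does not. Only the slots where $B$ attempts need the coupling. In the extra slots where only $A$ attempts, the outcome can only lower $\Delta^{A}$ or leave it on the same $+1$ path as $\Delta^{B}$, so no assumption about those outcomes is needed.
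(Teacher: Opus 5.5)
Your proof is correct and follows the paper's argument: induction on $t$ from the common initial AoI, with the coupling ensuring that any success under $B$ is also a success under $A$, followed by the term-by-term cost comparison using monotonicity of $c$. Stating the coupling as $\widetilde S_k^{B}(t)\le\widetilde S_k^{A}(t)$ and splitting on $A$'s outcome is simply a tidier packaging of the paper's case analysis over which policies serve and whether the coupled outcome succeeds.
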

\begin{IEEEproof}
\rev{The proof is by induction on $t$, starting from $\Delta_k^{A}(1)=\Delta_k^{B}(1)$. Suppose $\Delta_k^{A}(t)\le\Delta_k^{B}(t)$. If both policies serve cluster~$k$ and the coupled outcome is a success, both AoI values reset to one. If only $A$ serves and the outcome is a success, then $\Delta_k^{A}(t{+}1)=1\le\Delta_k^{B}(t{+}1)$. In every other case both AoI values increase by one, which preserves the ordering. Because $c$ is nondecreasing, the ordering of the AoI values implies the ordering of the accumulated costs.}
\end{IEEEproof}
\rev{Lemma 1 motivates the use of the instantaneous MMSE as a per-slot cost: postponing a feasible service cannot reduce the accumulated cost of a cluster when that cost is nondecreasing in AoI.}

To track long-term service shortfall, we define the per-cluster deficit queue $Z_k(t) \geq 0$ with $Z_k(1) = 0$:
\begin{equation}
Z_k(t{+}1) = \max\!\bigl(Z_k(t) + \omega_k - \widetilde{S}_k(t),\, 0\bigr),
\label{eq:deficit}
\end{equation}
where $\omega_k \in (0,1]$ is a predetermined long-run successful-update rate and $\widetilde S_k(t)\in\{0,1\}$ indicates a successful update in slot~$t$. The \rev{gain $g_k(t)$} is purely instantaneous and carries no memory across slots, so scheduling on it alone can repeatedly pass over clusters of only moderate momentary urgency, letting their long-run service rate fall short. \rev{The deficit queue supplies this memory by tracking the cumulative gap between the long-term target and achieved service rates: it increases by $\omega_k$ in every slot and decreases by one after a successful update~\cite{Neely2010}. A larger deficit raises the cluster's priority in the assignment, so clusters that fall behind are served sooner.}

When cluster~$k$ is served, reconstructing $\xi_k(t)$ within distortion $D_k^{\max}$ (noise-free observation, so distortion arises only from lossy compression) requires, by the Gaussian rate--distortion bound~\cite{Cover2006}, $B_k^{\min}(t)=\bigl[\tfrac{1}{2}\log_2\bigl(\sigma_{\xi,k}^2(\Delta_k(t))/D_k^{\max}\bigr)\bigr]_+$, where $[x]_+=\max\{x,0\}$. We require $d_{k,n}(t)\geq \max\{B_k^{\min}(t),B_0\}$ whenever cluster~$k$ is assigned, with $B_0>0$ a minimum-payload floor preventing zero-bit service from resetting the AoI.

\subsection{Channel Estimation and Achievable Rates}

At the \rev{beginning} of each slot, every CH transmits a short uplink pilot over the shared S2U sub-band, from which UAV~$n$ estimates the channels $\hat h_{k,n}(t)$ for all \rev{cluster--UAV} pairs $(k,n)$. \rev{S2U transmissions are organized into short packets. Each packet consists of $n_{\mathrm{p}}$ symbols and lasts $\tau_{\mathrm{p}}=n_{\mathrm{p}}/B^{\mathrm{s}}$ seconds, where $B^{\mathrm{s}}$ is the S2U bandwidth. The sensing window of duration $\tau_{\mathrm{s},n}(t)$ supports the transmission of $n_{k,n}(t)=B^{\mathrm{s}}\tau_{\mathrm{s},n}(t)$ symbols as a sequence of such packets, so a window carries at most $n_{k,n}(t)/n_{\mathrm{p}}$ packets. Every packet is preceded by its own short pilot, so the CSI used to decode any packet is aged by at most $\tau_{\mathrm{p}}$.} \rev{Pilot-estimation error is treated separately from channel-aging uncertainty, as detailed below,} and pilot overhead is assumed negligible compared with sensing and forwarding.

\rev{Between the transmission of a packet's pilot and its payload,} UAV motion at velocity $\vn(t)$ decorrelates the channel from its estimate via the Jakes temporal correlation~\cite{Jakes1974}
\begin{equation}
\rho_{k,n}(t) = J_0\!\left(\frac{2\pi\|\vn(t)\| f_c \rev{\tau_{\mathrm{p}}}}{\czo}\right),
\label{eq:jakes}
\end{equation}
with $J_0(\cdot)$ the zeroth-order Bessel function of the first kind, $f_c$ the carrier frequency\rev{, $\tau_{\mathrm{p}}$ the packet duration} and $c_0$ the speed of light. \rev{Although UAV motion causes Doppler-induced channel aging, the per-packet pilots limit this aging to at most one packet duration. At the operating point considered, Jakes temporal correlation exceeds $0.98$ even at the maximum UAV speed. The pilot therefore remains a reliable basis for decoding the packet it precedes.}

\rev{Let $h^{\mathrm{p}}_{k,n}(t)$ denote the true channel at the pilot instant and let $\eta^{\mathrm{p}}_{k,n}(t)$ denote the pilot-estimation error, so that $\hat h_{k,n}(t)=h^{\mathrm{p}}_{k,n}(t)+\eta^{\mathrm{p}}_{k,n}(t)$. Under the Gauss--Markov channel-aging model, the channel at the sensing instant is}
\begin{equation*}
\rev{\begin{aligned}
h_{k,n}(t)
&=\rho_{k,n}(t)h^{\mathrm{p}}_{k,n}(t)+e_{k,n}(t)\\
&=\rho_{k,n}(t)\hat h_{k,n}(t)+e_{k,n}(t)
-\rho_{k,n}(t)\eta^{\mathrm{p}}_{k,n}(t).
\end{aligned}}
\end{equation*}
\rev{Here $\rho_{k,n}(t)$ is the Jakes correlation, and $e_{k,n}(t)$ is a zero-mean proper complex Gaussian innovation independent of both the true pilot-time channel and the pilot-estimation error. With the large-scale channel power held constant over the pilot-to-payload interval, the correlated component has mean power $\rho_{k,n}^2(t)\bar\beta_{k,n}(t)$. Preserving the total mean power $\bar\beta_{k,n}(t)$ therefore gives $\mathbb{E}[|e_{k,n}(t)|^2] =(1-\rho_{k,n}^2(t))\bar\beta_{k,n}(t)$.
This aging variance defines $b_{k,n}(t)$ in the effective SINR, while the separate pilot-error contribution $-\rho_{k,n}(t)\eta^{\mathrm{p}}_{k,n}(t)$ is addressed through the amplitude margins described below. The mean power of the $k \to n$ channel is}
\begin{equation*}
\rev{\bar\beta_{k,n}(t) = \beta_0^{\mathrm{S2U}}\,\|\qn(t{+}1)-\mathbf{q}_k\|^{-\alpha_p^{\mathrm{S2U}}}.}
\end{equation*}
\rev{To compute the effective SINR, we model the innovation residual and the interference uncertainty jointly as Gaussian noise whose variance equals the sum of their variances, assuming these contributions are mutually uncorrelated. The Gaussian model is exact for the innovation under Rayleigh fading. For the interference, it is the pessimistic choice: among all noise distributions with a given variance, Gaussian noise makes decoding most difficult~\cite{Durisi2016}. This worst-case interpretation concerns asymptotic capacity for independent additive noise; at finite blocklength, the Gaussian model is a planning approximation used with conservative amplitude margins. This gives the pessimistic effective SINR}
\begin{equation}
\gamma_{k,n}^{\mathrm{est}}(t)
=
\frac{
\rev{a_{k,n}^2(t)} P_k(t)
}{
\sigma_n^2
+ \rev{b_{k,n}(t)\,P_k(t)}
+ \sum_{k'\neq k} S_{k'}(t)P_{k'}(t)\,\rev{c_{k',n}(t)}
},
\label{eq:est_SINR}
\end{equation}
\rev{where $a_{k,n}(t)$ is the desired-link amplitude after the safety margin, $b_{k,n}(t)$ is the desired-link innovation power, and $c_{k',n}(t)$ is the interference coefficient after the amplitude margin. Let $\sigma_{\eta,k,n}(t)$ denote the standard deviation of either real or imaginary component of the pilot error $\eta^{\mathrm{p}}_{k,n}(t)$. Omitting the slot index in the following definitions,}
\begin{align*}
\rev{a_{k,n}} &\rev{{}= \bigl[|\rho_{k,n}\hat h_{k,n}|-z_{\mathrm{own}}|\rho_{k,n}|\sigma_{\eta,k,n}\bigr]_+,} \\
\rev{b_{k,n}} &\rev{{}= (1-\rho_{k,n}^2)\bar\beta_{k,n},} \\
\rev{c_{k',n}} &\rev{{}= \Bigl(|\rho_{k',n}\hat h_{k',n}|} \\
&\rev{{}\quad+z_{\mathrm{int}}\sqrt{\rho_{k',n}^2\sigma_{\eta,k',n}^{2}+\tfrac12 b_{k',n}}\Bigr)^2.}
\end{align*}
\rev{Both coefficients are known scalars in the current slot, computed from the pilot estimates. The desired-link amplitude is reduced, with a floor of zero, by $z_{\mathrm{own}}$ standard deviations of the aged pilot-error term $\rho_{k,n}(t)\eta^{\mathrm{p}}_{k,n}(t)$. Each interference amplitude is increased by $z_{\mathrm{int}}$ standard deviations of the combined residual $e_{k',n}(t)-\rho_{k',n}(t)\eta^{\mathrm{p}}_{k',n}(t)$, then squared to obtain its interference coefficient. Here each standard deviation refers to either real or imaginary component of the corresponding complex error. For a proper complex Gaussian error, the unconditional probability that its magnitude exceeds $z$ component standard deviations is $e^{-z^2/2}$. Thus, a desired marginal probability $\delta_{\mathrm{m}}$ corresponds to $z=\sqrt{2\ln(1/\delta_{\mathrm{m}})}$. Larger margins make planning more conservative and can reduce the feasible service opportunities. A transmission is \emph{certified} when its promised data volume satisfies the rate bound below under these adjusted estimates.}
Under the Polyanskiy--Poor--Verd\'{u}~(PPV) normal approximation~\cite{Polyanskiy2010}, \rev{at most $n_{\mathrm{p}}C(\gamma_{k,n}^{\mathrm{est}}(t))-\Phi\sqrt{n_{\mathrm{p}}V(\gamma_{k,n}^{\mathrm{est}}(t))}$ bits can be decoded from each packet at block-error probability $\varepsilon_{\mathrm{p}}$; summing over the $n_{k,n}(t)/n_{\mathrm{p}}$ packets of the window, the decodable data volume is}
\begin{equation}
d_{k,n}(t) \leq n_{k,n}(t)\,C\bigl(\gamma_{k,n}^{\mathrm{est}}(t)\bigr) - \Phi\,\rev{n_{k,n}(t)\sqrt{V\bigl(\gamma_{k,n}^{\mathrm{est}}(t)\bigr)/n_{\mathrm{p}}}},
\label{eq:FBL}
\end{equation}
with $C(\gamma) = \log_2(1{+}\gamma)$, $V(\gamma) = 1 - (1{+}\gamma)^{-2}$ the channel dispersion~\cite{Zhu2023} and \rev{$\Phi = Q^{-1}(\varepsilon_{\mathrm{p}})/\ln 2$, the Gaussian tail factor expressed in bits. Here $\varepsilon_{\mathrm{p}} = \varepsilon_{\max}n_{\mathrm{p}}/(B^{\mathrm{s}}\delta_t)$ is the per-packet error budget. A window contains at most $B^{\mathrm{s}}\delta_t/n_{\mathrm{p}}$ packets, so this allocation makes the sum of the nominal packet-error budgets at most $\varepsilon_{\max}$}. \rev{The rate expression~\eqref{eq:FBL} is an approximation rather than a bound, for two reasons. First, the normal approximation becomes exact only as the packet grows long, and our packets are short. Second, the CSI mismatch is treated as Gaussian noise, which as noted above is the pessimistic choice at a given variance. The scheduler makes its decision at the start of the slot, before the per-packet pilots of that slot are observed. It therefore substitutes the slot-start pilot estimate for the per-packet channel estimates. The uncertainty introduced by this substitution, arising from both the innovation and the interference, is incorporated into~\eqref{eq:est_SINR} through the corresponding conditional variances.}

By contrast, the A2G blocklength $B^{\mathrm{A2G}}\tau_{\mathrm{f},n}(t)$ ($B^{\mathrm{A2G}}$ the A2G bandwidth) is large enough that finite-blocklength corrections are negligible, so the A2G rate (operating threshold $\gamma^{\mathrm{A2G}}$) follows Shannon capacity:
\begin{equation}
R_n^{\mathrm{A2G}}(t)
=
B^{\mathrm{A2G}}
\log_2\!\bigl(1+\gamma_n^{\mathrm{A2G}}(t)\bigr).
\label{eq:A2G_rate}
\end{equation}

\subsection{Swarm Connectivity Model}
\label{subsec:aerial_connectivity}

\rev{Swarm connectivity depends on the active A2A links, which succeed only probabilistically under fading. The resulting A2A network is therefore modeled as a random graph. The algebraic connectivity $\bar\lambda_2$ and the Fiedler vector $\vbar_2$ are obtained from its expected Laplacian. The scheduler keeps $\bar\lambda_2$ above a floor and rewards it in the objective through the connectivity price $\mu$. The components of $\vbar_2$ quantify the importance of individual inter-UAV links to overall connectivity and are subsequently used by the matching stage to evaluate the connectivity cost of moving a UAV toward a cluster.}

The A2A SNR $\gamma_{nm}^{\mathrm{A2A}}(t)$ in~\eqref{eq:A2A_A2G_SNR} is the average (path-loss) SNR; under Rician fading with factor $\kappa^{\mathrm{A2A}}$~\cite{Hua2025A2A}, the instantaneous SNR is $\gamma_{nm}^{\mathrm{A2A}}(t)|g_{nm}(t)|^2$ with unit-mean fading gain $|g_{nm}(t)|^2$. Because the A2A links carry low-rate coordination traffic rather than short-packet sensor data, they operate in the long-blocklength regime: finite-blocklength corrections are negligible, as for the A2G link in~\eqref{eq:A2G_rate}, and a link is declared successful when its instantaneous SNR exceeds an operating threshold $\gamma^{\mathrm{A2A}}$, giving the link success (non-outage) probability $\tilde{p}_{nm}(t)=\Pr[|g_{nm}(t)|^2\geq\gamma^{\mathrm{A2A}}/\gamma_{nm}^{\mathrm{A2A}}(t)]$. Under Rician fading the complementary cumulative distribution function~(CCDF) of $|g_{nm}|^2$ is the first-order Marcum~$Q_1$~\cite{Marcum1960}, yielding
\begin{equation}
\tilde{p}_{nm}(t) =
Q_1\!\left(
\sqrt{2\kappa^{\mathrm{A2A}}},
\sqrt{
\frac{2(\kappa^{\mathrm{A2A}}+1)\gamma^{\mathrm{A2A}}}
{\gamma_{nm}^{\mathrm{A2A}}(t)}
}
\right).
\label{eq:marcum}
\end{equation}

The $N$ UAVs are the vertices of a random graph; a candidate edge is a UAV pair $(n,m)$ allowed to form an A2A link, collected in the distance-limited candidate edge set $\mathcal{E}(t)=\{(n,m):n<m,\ \|\qn(t)-\mathbf{q}_m(t)\|\le R_{\mathrm{c}}\}$ for an A2A candidate radius $R_{\mathrm{c}}$, evaluated at the start-of-slot positions so that $\mathcal{E}(t)$ is part of the observed state; $R_{\mathrm{c}}=\infty$ recovers the complete candidate set. Let $a_{nm}(t)\sim\mathrm{Bernoulli}(\tilde p_{nm}(t))$ be the independent indicator that candidate edge $(n,m)$ is present in slot~$t$ (with $a_{nm}(t)\equiv 0$ for $(n,m)\notin\mathcal{E}$), and let $\mathbf{E}_{nm}=(\mathbf{e}_n-\mathbf{e}_m)(\mathbf{e}_n-\mathbf{e}_m)^{\top}$ denote the edge Laplacian, where $\mathbf{e}_n$ is the $n$-th standard basis vector. The \emph{realized} A2A graph then has Laplacian $\mathbf{L}(t)=\sum_{(n,m)\in\mathcal{E}(t)}a_{nm}(t)\,\mathbf{E}_{nm}$, a random matrix. Averaging over the edge successes gives the \emph{expected} graph Laplacian $\Lbar(t)=\mathbb{E}[\mathbf{L}(t)]\in\mathbb{R}^{N\times N}$, the degree matrix minus the probability-weighted adjacency, with entries $[\Lbar(t)]_{nm}=-\tilde{p}_{nm}(t)$ for $n\neq m$ and $[\Lbar(t)]_{nn}=\sum_{j\neq n}\tilde{p}_{nj}(t)$; it is symmetric positive semidefinite for reciprocal links, with eigenvalues $0=\bar{\lambda}_1(t)\leq\cdots\leq\bar{\lambda}_N(t)$.

The second-smallest eigenvalue $\bar{\lambda}_2(t)=\lambda_2(\Lbar(t))$ is the algebraic connectivity \rev{of the expected graph and is positive if and only if that graph is connected}; its \rev{corresponding} eigenvector $\vbar_2(t)$ is the Fiedler vector. \rev{Each Fiedler-vector entry is associated with one UAV. UAVs that are strongly connected tend to have nearly equal entries, while UAVs separated by a weakly connected part of the graph tend to have substantially different entries. A link joining UAVs with widely separated Fiedler entries therefore spans a connectivity bottleneck, and strengthening that link has a comparatively large effect on the algebraic connectivity. More precisely, if the Fiedler eigenvalue is distinct from all other eigenvalues of $\Lbar(t)$, increasing the reliability $\tilde p_{nm}$ of link $(n,m)$ raises $\lambda_2$ at the rate $\partial\lambda_2/\partial \tilde p_{nm}=([\vbar_2]_n-[\vbar_2]_m)^2$, the squared difference of the two endpoint entries.}

Because A2A links fail at random, connectivity cannot be guaranteed in every realization, so we require it with high probability, a chance constraint enforced by two conditions: the expected algebraic connectivity is kept above a target $\lambda_{\min}>0$, and each candidate link is held above a per-edge reliability floor with slack $\eta\in(0,1)$, so the realized graph stays close to its mean. A union bound over the $|\mathcal{E}(t)|$ candidate edges, each failing with probability at most $\eta$, keeps the violation probability within $\delta_c\in(0,1)$ when $|\mathcal{E}(t)|\,\eta\leq\delta_c$; on degree-limited candidate graphs $|\mathcal{E}(t)|=O(N)$, so the admissible slack decays only linearly in the swarm size rather than quadratically.

\subsection{UAV Propulsion and Energy Model}
\label{subsec:propulsion_energy}

Following the standard rotary-wing power model in~\cite{Zeng2019}, the propulsion power for UAV~$n$ at speed $v=\|\vn(t)\|$ is
\begin{align}
P_{\mathrm{fly}}(v) &= P_0\!\left(1 + \frac{3v^2}{U_{\mathrm{tip}}^2}\right) + \frac{d_0\rho_a s_a A}{2}\,v^3 \nonumber\\
&\quad + P_i\left(\sqrt{1+\frac{v^4}{4v_0^4}} - \frac{v^2}{2v_0^2}\right)^{1/2},
\label{eq:propulsion}
\end{align}
where $P_0, P_i$ are the blade-profile and induced hover powers, $U_{\mathrm{tip}}$ the rotor-blade tip speed, $v_0$ the mean rotor-induced hover velocity, $d_0$ the fuselage drag ratio, $\rho_a$ the air density, $s_a$ the rotor solidity and $A$ the rotor-disc area.

Remaining UAV energy $E_n^{\mathrm{rem}}(t)$ depletes through propulsion and relay (the small fixed S2U receive power is omitted, being dominated by propulsion),
\begin{equation}
E_n^{\mathrm{rem}}(t{+}1)
=
E_n^{\mathrm{rem}}(t)
-P_{\mathrm{fly}}(\|\vn(t)\|)\,\delta_t
-P^{\mathrm{relay}}(t)\,\tau_{\mathrm{f},n}(t),
\label{eq:E_uav_update}
\end{equation}
and cluster~$k$'s aggregate energy $E_k^{\mathrm{s,rem}}(t)$, tracked at cluster level since each S2U transmission draws from a shared reservoir regardless of CH rotation, depletes as
\begin{equation}
E_k^{\mathrm{s,rem}}(t{+}1)
=
E_k^{\mathrm{s,rem}}(t)
-\sum_{n=1}^{N}\alpha_{k,n}(t)\,P_k(t)\,\tau_{\mathrm{s},n}(t).
\label{eq:E_sensor_update}
\end{equation}

\section{Problem Formulation}
\label{sec:formulation}

The finite-horizon joint optimization $\mathsf{(P)}$ \rev{minimizes the expected accumulated estimation error of the monitored sources, offset by a weighted connectivity reward,} over cluster--UAV assignment, motion, time allocation, transmit powers and data volumes, subject to finite-blocklength reliability, connectivity and energy constraints:
\begin{subequations}
\label{eq:P}
\allowdisplaybreaks
\begin{align}
\textsf{(P):}\quad
\min_{\mathbf{x}}\quad
& \rev{\mathbb{E}\Bigl[\sum_{t=1}^{T}\sum_{k=1}^{K}
\sigma_{\xi,k}^2\bigl(\Delta_k(t)\bigr)\Bigr]
\;-\;\mu\sum_{t=1}^{T}\bar\lambda_2(t)}
\label{eq:TCU}\\
\text{s.t.}\quad
& d_{k,n}(t)
\leq
\alpha_{k,n}(t)
\Bigl[
n_{k,n}(t)C\!\left(\gamma_{k,n}^{\mathrm{est}}(t)\right)
\nonumber\\
&\qquad
-\Phi\,
\rev{n_{k,n}(t)\sqrt{
V\!\left(\gamma_{k,n}^{\mathrm{est}}(t)\right)/n_{\mathrm{p}}}}
\Bigr]_+
\label{eq:C1a}\\
& \max\{B_k^{\min}(t),B_0\}\alpha_{k,n}(t)
\leq d_{k,n}(t)
\label{eq:C1b}\\
& \gamma_{n}^{\mathrm{A2G}}(t)
\geq \gamma^{\mathrm{A2G}}
\label{eq:C3}\\
& \bar\lambda_2(t) \;\geq\; \lambda_{\min},
\label{eq:C4}\\
& \tilde p_{nm}(t) \;\geq\; 1 - \eta, \quad \forall\, n < m,
\label{eq:C4_floor}\\
& \qn(t{+}1)
=
\qn(t)+\vn(t)\delta_t
\label{eq:C5}\\
& \|\vn(t)\|
\leq V_{\max}
\label{eq:C6}\\
& H_{\min}
\leq
[\qn(t{+}1)]_3
\leq
H_{\max}
\label{eq:C7}\\
& \|\qn(t{+}1)-\mathbf{q}_m(t{+}1)\|
\geq d_{\mathrm{safe}}
\label{eq:C8}\\
& \sum_{n=1}^{N}\alpha_{k,n}(t)
\leq 1,
\qquad
\sum_{k=1}^{K}\alpha_{k,n}(t)
\leq 1
\label{eq:C9}\\
& \tau_x^{\min}\,{\textstyle\sum_{k}}\alpha_{k,n}(t)\leq\tau_{x,n}(t),
  \quad x\in\{\mathrm{s},\mathrm{f}\},\notag\\
& \tau_{\mathrm{s},n}(t)+\tau_{\mathrm{f},n}(t)\leq \delta_t
\label{eq:C10}\\
& d_{k,n}(t)
\leq
R_n^{\mathrm{A2G}}(t)\tau_{\mathrm{f},n}(t)
+
M\bigl(1{-}\alpha_{k,n}(t)\bigr)
\label{eq:C11}\\
& \sum_{t=1}^{T} \widetilde S_k(t) \geq 1, \quad \forall k
\label{eq:C12}\\
& P_{\mathrm{fly}}\!\left(\|\vn(t)\|\right)\delta_t
+
P^{\mathrm{relay}}(t)\tau_{\mathrm{f},n}(t)
\leq
E_n^{\mathrm{rem}}(t)
\label{eq:C13}\\
& \sum_{n=1}^{N}
\alpha_{k,n}(t)P_k(t)\tau_{\mathrm{s},n}(t)
\leq
E_k^{\mathrm{s,rem}}(t).
\label{eq:C14}
\end{align}
\end{subequations}

\subsection{Problem Components}

The decision vector $\mathbf{x}$ collects $\{\vn(t), \alpha_{k,n}(t), \tau_{\mathrm{s},n}(t), \tau_{\mathrm{f},n}(t), P_k(t), P^{\mathrm{relay}}(t), d_{k,n}(t)\}$ over all $(k,n,t)\in\mathcal{K}\times\mathcal{N}\times\mathcal{T}$, where $\mathcal{K}=\{1,\ldots,K\}$, $\mathcal{N}=\{1,\ldots,N\}$, $\mathcal{T}=\{1,\ldots,T\}$. \rev{In the objective, the expectation is taken over the random service outcomes, $\sigma_{\xi,k}^2(\cdot)$ is the OU conditional variance and $\mu>0$ is the connectivity price. The first term represents the cumulative conditional estimation variance, which is equivalent, up to a constant scaling factor, to the time-averaged mean-square estimation error over the finite horizon. Thus, the objective directly minimizes the estimation error rather than an information-freshness proxy. The second term rewards the algebraic connectivity of the expected swarm Laplacian $\bar\lambda_2(t)$, so every UAV, assigned or idle, has an incentive to keep the swarm graph well connected.}

The constraints fall into four groups: (i)~communication reliability and connectivity~\eqref{eq:C1a}--\eqref{eq:C4_floor}; (ii)~UAV mobility and assignment~\eqref{eq:C5}--\eqref{eq:C9}; (iii)~per-slot resource allocation and coverage~\eqref{eq:C10}--\eqref{eq:C12}; and (iv)~energy causality~\eqref{eq:C13}--\eqref{eq:C14}. Within group~(i), \eqref{eq:C1a}--\eqref{eq:C1b} bound delivered bits between the FBL rate and a minimum payload. The Shannon expression~\eqref{eq:A2G_rate} sets the A2G throughput used in the delivery bound~\eqref{eq:C11} but places no lower bound on the SNR itself: on its own it would let a UAV drift far from the FC, where a vanishingly small A2G SNR still returns a nonzero rate sufficient for a light payload. Constraint~\eqref{eq:C3} closes this gap by holding the A2G SNR at or above the receiver operating threshold $\gamma^{\mathrm{A2G}}$, so every active UAV remains within reliable forwarding range of the FC rather than merely within information-theoretic reach. Constraints~\eqref{eq:C4}--\eqref{eq:C4_floor} enforce probabilistic connectivity through an LMI on the expected algebraic connectivity together with a per-edge reliability floor. In group~(iii), the big-$M$ term in~\eqref{eq:C11} deactivates the delivery bound when $\alpha_{k,n}(t)=0$ and~\eqref{eq:C12} requires every cluster to be served by the horizon. The big-$M$ coefficient is set to $B^{\mathrm{s}}\delta_t\log_2(1+\gamma_{\max})$, where $\gamma_{\max}=P_k^{\max}h_{\max}^2/\sigma_n^2$ is the maximum feasible SINR, with $h_{\max}$ an a priori envelope on $|h_{k,n}(t)|$ derived from the minimum UAV--cluster distance under the S2U path-loss model $\beta_0^{\mathrm{S2U}}d^{-\alpha_p^{\mathrm{S2U}}}$. We assume the initial deployment renders $\mathsf{(P)}$ feasible.

\subsection{Per-Slot Problem}
\label{sec:perslot}

For the estimation-error component of the objective, serving cluster~$k$ through UAV~$n$ in slot~$t$ reduces the expected next-slot estimation error by $p_{k,n}^{\mathrm{succ}}(t)\,g_k(t)$, whereas the remaining error terms are independent of the slot-$t$ decision. Since every feasible transmission is designed to satisfy the prescribed reliability target, the scheduler uses the gain $g_k(t)$ as the service benefit. Incorporating the connectivity reward, the per-slot problem is formulated as 
\begin{equation}
\rev{\textsf{(P$_t$):}\quad
\max_{\mathbf{x}(t)}\;\;
\sum_{k=1}^{K}\sum_{n=1}^{N}\alpha_{k,n}(t)\,g_k(t)
\;+\;\mu\,\bar\lambda_2(t)}
\label{eq:perslot}
\end{equation}
\rev{subject to the slot-$t$ constraints of~\eqref{eq:P}. Two remarks connect the per-slot problem to the finite-horizon formulation~\eqref{eq:P}. First, the proposed scheduler is the one-step-lookahead policy of~\eqref{eq:P}: at each slot it maximizes the expected one-slot decrease of the objective, and by Lemma~\ref{lem:monotone} serving whenever it is feasible never increases the accumulated cost. Second, one constraint of~\eqref{eq:P} has no slot-$t$ counterpart. The \rev{long-term} coverage constraint~\eqref{eq:C12} requires \rev{each cluster to achieve a prescribed service rate over the entire horizon and cannot be enforced directly within a single slot}. The deficit and starvation mechanisms address it at the slot level by raising the priority of clusters that fall behind.}

\subsection{Sources of Nonconvexity}

\rev{On the connectivity side, $\lambda_2(\Lbar)$ is concave in the edge weights, so the reward $\mu\bar\lambda_2(t)$ admits an exact convex representation; nonconvexity enters only through the Marcum-$Q$ dependence of the edge probabilities on kinematics and relay power.} \rev{With the system state fixed,} the per-slot subproblem $(P_t)$ is a mixed-integer nonlinear program~(MINLP). Beyond the binaries $\alpha_{k,n}(t)$, the continuous nonconvexities are: \eqref{eq:C1a} is non-concave in $(P_k,\tau_{\mathrm{s},n})$; the Marcum-$Q$ probabilities in~\eqref{eq:C4_floor} are non-convex in kinematic and power variables; \eqref{eq:C8} is reverse-convex; \eqref{eq:C14} is trilinear in $\alpha_{k,n}P_k\tau_{\mathrm{s},n}$; \eqref{eq:C11} contains an inverse-power-law $\beta_{n,\mathrm{FC}}(t)$ in velocity and a non-jointly-concave product $R_n^{\mathrm{A2G}}\tau_{\mathrm{f},n}$; and the propulsion model~\eqref{eq:propulsion} is also nonconvex.

\section{Solution Methodology}
\label{sec:solution}

We solve~\eqref{eq:P} through a per-slot myopic decomposition. At each slot~$t$, the observed state $\mathcal{S}(t)=\{\Delta_k(t),\qn(t),Z_k(t),E_n^{\mathrm{rem}}(t),E_k^{\mathrm{s,rem}}(t)\}$ is fixed, and the single-slot problem is handled in three steps: (i)~the binaries $\alpha_{k,n}$ are relaxed to $[0,1]$ and nonconvex terms replaced by convex surrogates via SCA; (ii)~the relaxed convex subproblem is solved, with the horizon coverage constraint~\eqref{eq:C12} approximated by the deficit and starvation mechanisms; (iii)~integer assignments are recovered by Hungarian matching, followed by feasibility repair and state updates. 


\subsection{Exact Representation of the Per-Slot Objective}
\label{sec:obj_surrogate}

\rev{The per-slot objective~\eqref{eq:perslot} consists of an affine service-urgency term and the connectivity reward $\mu\bar\lambda_2(t)$. Let $\mathbf{w}$ denote the auxiliary edge weights, one weight per candidate inter-UAV link, and define the corresponding weighted Laplacian as $\Lbar(\mathbf{w})$. Because $\Lbar(\mathbf{w})$ is affine in $\mathbf{w}$, its algebraic connectivity is concave in the edge weights: $\lambda_2(\Lbar(\mathbf{w})) = \min_{\mathbf{u}\perp\mathbf{1},\,\|\mathbf{u}\|=1}\mathbf{u}^\top\Lbar(\mathbf{w})\,\mathbf{u}$ is a minimum of functions that are linear in $\mathbf{w}$, and a minimum of linear functions is concave. Therefore, the connectivity reward can be maximized exactly through an epigraph variable. Let $\mathbf{P}_{\perp} = \mathbf{I} - \frac{1}{N}\mathbf{1}\mathbf{1}^{\top}$ be the orthogonal projector onto the complement of $\mathbf{1}$ and let $s$ be a scalar variable. The subproblem objective at slot~$t$ is}
\begin{equation}
\rev{\sum_{k=1}^{K}\sum_{n=1}^{N}
\alpha_{k,n}\bigl[g_k(t)+\lambda_Z Z_k(t)\bigr]
\;+\;\mu\, s,}
\label{eq:augmented_obj}
\end{equation}
\rev{maximized subject to}
\begin{equation}
\rev{\Lbar(\mathbf{w}) - s\,\mathbf{P}_{\perp} \succeq \mathbf{0},
\qquad s \ge \lambda_{\min},}
\label{eq:C4_LMI}
\end{equation}
where $g_k(t)$ is the deterministic service gain~\eqref{eq:gain} and $\lambda_Z>0$ weights the deficit queue~\eqref{eq:deficit}, included so that the continuous solve and the later matching use the same priorities.

The LMI in~\eqref{eq:C4_LMI} is equivalent to $\lambda_2(\Lbar(\mathbf{w})) \ge s$. To see this, note that $\Lbar(\mathbf{w})$ has $\mathbf{1}$ as its zero eigenvector and $\mathbf{P}_{\perp}\mathbf{1}=\mathbf{0}$, so along $\mathbf{1}$ the LMI holds with equality. For every $\mathbf{u}\perp\mathbf{1}$ it requires $\mathbf{u}^{\top}\Lbar(\mathbf{w})\,\mathbf{u} \geq s\|\mathbf{u}\|^2$, and the smallest value of $\mathbf{u}^{\top}\Lbar(\mathbf{w})\,\mathbf{u}/\|\mathbf{u}\|^2$ over $\mathbf{u}\perp\mathbf{1}$ is $\lambda_2(\Lbar(\mathbf{w}))$. Because $s$ enters the objective with the positive weight $\mu$, the optimizer pushes $s$ up until $s=\lambda_2(\Lbar(\mathbf{w}))$, so the reward is exact. The constraint $s\ge\lambda_{\min}$ then enforces the connectivity floor~\eqref{eq:C4} directly. Both objective terms are affine in $(\boldsymbol{\alpha},s)$. The auxiliary edge weights $\mathbf{w}$ and the linearization of the Marcum-$Q$ dependence of the cap $\tilde p_{nm}$ on positions and relay power are constructed below.

\subsection{Convex Surrogates for Nonconvex Constraints}
\label{sec:surrogates}

We describe convex surrogates for nonconvex communication and mobility constraints~\eqref{eq:C1a}, \eqref{eq:C4}--\eqref{eq:C4_floor}, \eqref{eq:C8}, and~\eqref{eq:C11}.

\paragraph{\eqref{eq:C1a}: FBL Data Capacity}
\rev{The finite-blocklength expression~\eqref{eq:C1a} contains a dispersion penalty in which the SINR appears under a square root. We first remove this dependence using the uniform bound $V(\gamma)\leq 1$. The penalty is then at most $(\Phi/\sqrt{n_{\mathrm{p}}})\,n_{k,n}$, which is linear in the window duration $\tau_{\mathrm{s},n}$. Replacing the original penalty by this upper bound yields a conservative capacity constraint. For $n_{\mathrm{p}}=10^3$ the tightening costs at most $\Phi/\sqrt{n_{\mathrm{p}}}\approx 0.26$ bits per symbol. After this replacement,} \eqref{eq:C1a} has \rev{two} nonconvexities, handled in \rev{two} layers: Layer~1 lower-bounds the rate via the quadratic transform \rev{and} Layer~2 \rev{handles} the blocklength--rate product.

\emph{Layer~1 (Quadratic transform).}
The right-hand side of~\eqref{eq:C1a} is non-concave in $(P_k,\tau_{\mathrm{s},n},P_{k'})$ because the estimated SINR is a ratio of decision variables. Rewrite~\eqref{eq:est_SINR} as
$\gamma_{k,n}^{\mathrm{est}}=a_{k,n}^{2}P_k/D_{k,n}$,
where the desired-link amplitude $a_{k,n}$ (the link from cluster~$k$ to UAV~$n$) is a known scalar, whereas the interference-plus-noise denominator $D_{k,n}$ is not constant: it is built from the other clusters' transmit powers and service decisions \rev{and, through the residual term $b_{k,n}P_k$, from cluster~$k$'s own power}.

This denominator is bilinear in $(\alpha_{k',n}, P_{k'})$ through the service indicators $S_{k'}$ in~\eqref{eq:est_SINR}. We expose this as an auxiliary variable $\pi_{k'} = S_{k'}P_{k'}$ for all $k'$, bounded by its McCormick envelope~\cite{McCormick1976} over the box $S_{k'}\in[0,1]$, $P_{k'}\in[0,P_{k'}^{\max}]$. \rev{Collecting the auxiliaries in $\boldsymbol{\pi}=\{\pi_{k'}\}$, the interference-plus-noise denominator becomes affine in $(P_k,\boldsymbol{\pi})$,}
\begin{equation*}
\rev{D_{k,n}=\sigma_n^2+b_{k,n}\,P_k+\sum_{k'\neq k}c_{k',n}\,\pi_{k'},}
\end{equation*}
\rev{where $b_{k,n}$ and $c_{k',n}$ are the known coefficients of~\eqref{eq:est_SINR}.}

The remaining non-concavity is $\gamma_{k,n}^{\mathrm{est}} = a_{k,n}^2 P_k/D_{k,n}$ inside $\log_2(1+\gamma_{k,n}^{\mathrm{est}})$. The quadratic transform~\cite{Shen2018} lower-bounds a fractional concave function around an iterate. With $a_{k,n}^{(i)}, P_k^{(i)}, D_{k,n}^{(i)}$ the values of $a_{k,n}, P_k, D_{k,n}$ at the current SCA iterate~$i$, defining $y_{k,n}^{(i)} = a_{k,n}^{(i)}\sqrt{P_k^{(i)}}/D_{k,n}^{(i)}$ yields the concave lower bound
\begin{equation}
C(\gamma_{k,n}^{\mathrm{est}})
\geq
\log_2\!\left(
1
+
2y_{k,n}^{(i)}a_{k,n}^{(i)}\zeta_{k,n}
-
(y_{k,n}^{(i)})^2 D_{k,n}
\right),
\label{eq:QT}
\end{equation}
where $\zeta_{k,n}\geq 0$ is a concave surrogate for $\sqrt{P_k}$ enforced by the second-order-cone~(SOC) constraint $\zeta_{k,n}^2\leq P_k$. The log of an affine argument in~\eqref{eq:QT} is represented by the exponential-cone constraint $r_{k,n} \leq \log_2(1+u_{k,n})$, with $r_{k,n}$ the auxiliary rate variable lower-bounding $C(\gamma_{k,n}^{\mathrm{est}})$ and $u_{k,n}$ the affine argument $2y_{k,n}^{(i)}a_{k,n}^{(i)}\zeta_{k,n} - (y_{k,n}^{(i)})^2 D_{k,n}$.

\emph{Layer~2 (Handling \(n_{k,n}r_{k,n}\)).}
\rev{The remaining nonconvexity is the product of two decision variables, the blocklength $n_{k,n}$ and the rate $r_{k,n}$. We give the product its own variable, $w_{k,n}=n_{k,n}r_{k,n}$, and replace it by its McCormick envelope, the standard linear relaxation built from lower and upper bounds $[n^L,n^U]\times[r^L,r^U]$ on the two factors. The bounds are retightened at every SCA iteration by optimization-based bound tightening~(OBBT)~\cite{Gleixner2017}, which keeps the envelope close to the true product. The rate cap $r^U$ is set per pair to $\log_2(1+a_{k,n}^2 P_k^{\max}/\sigma_n^2)$, the rate of the pair at full power and zero interference. No feasible rate exceeds this value, and the per-pair cap keeps the envelope tight for low-SINR pairs.}

Combining \rev{both} layers, the \eqref{eq:C1a} surrogate at iteration~$i$ is
\begin{subequations}
\label{eq:C1a_surrogate}
\begin{align}
& d_{k,n}
\leq
\alpha_{k,n}\bigl[w_{k,n}-\rev{(\Phi/\sqrt{n_{\mathrm{p}}})\,n_{k,n}}\bigr]_+, \\
& w_{k,n} \leq n_{k,n}r_{k,n} \quad \text{(McCormick)}, \\
& r_{k,n}\leq \log_2(1+u_{k,n}), \quad \zeta_{k,n}^2\leq P_k, \\
& u_{k,n} = 2y_{k,n}^{(i)}a_{k,n}^{(i)}\zeta_{k,n}-(y_{k,n}^{(i)})^2 D_{k,n}, \\
& D_{k,n} = \sigma_n^2 \rev{+ b_{k,n}P_k + \sum_{k'\neq k}c_{k',n}\,\pi_{k'}}, \\
& \pi_{k'} = S_{k'}P_{k'} \quad \text{(McCormick)},
\end{align}
\end{subequations}
\rev{where $y_{k,n}^{(i)}$ and $a_{k,n}^{(i)}$ are the quadratic-transform iterate values of~\eqref{eq:QT}.} The only term coupling the relaxed assignment $\alpha_{k,n}$ nonconvexly with the other variables is the assignment-conditioned product $\alpha_{k,n}[w_{k,n}-\rev{(\Phi/\sqrt{n_{\mathrm{p}}})\,n_{k,n}}]$; relaxing it through a McCormick envelope over $\alpha_{k,n}\in[0,1]$ and $w_{k,n}\in[w^L,w^U]$ makes the surrogate jointly convex in $(\alpha_{k,n}, P_k, \tau_{\mathrm{s},n}, \boldsymbol{\pi}, \zeta_{k,n}, r_{k,n}, u_{k,n}, w_{k,n}, d_{k,n})$.

\begin{proposition}[FBL surrogate with post-recovery feasibility verification]
\label{prop:fbl_inner}
Fix the SCA linearization point and McCormick-tightened bounds. Any feasible point
$(\alpha_{k,n}, P_k, \tau_{\mathrm{s},n}, \boldsymbol{\pi}, \zeta_{k,n}, r_{k,n}, u_{k,n}, w_{k,n}, d_{k,n})$
of \eqref{eq:C1a_surrogate} is a candidate that, after integer recovery, is verified
against \eqref{eq:C1a}; if \rev{that constraint is} violated, the feasibility-repair routine zeros the most-violating assignment and re-solves.
\end{proposition}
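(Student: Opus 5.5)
The statement is procedural, so the proof has two parts: a conservativeness argument for the parts of the surrogate that are inner approximations, and a termination argument for the repair loop. It should not claim that feasibility of \eqref{eq:C1a_surrogate} alone implies \eqref{eq:C1a}.

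\emph{Which surrogate steps are conservative.} Fix a feasible point and assume, after integer recovery, that $\alpha_{k,n}=1$. Then:
\begin{itemize}
\item Since $V(\gamma)\le 1$, the replacement of the dispersion penalty by $(\Phi/\sqrt{n_{\mathrm p}})\,n_{k,n}$ can only shrink the right-hand side of \eqref{eq:C1a}.
\item The quadratic transform \eqref{eq:QT} gives $\log_2(1+u_{k,n})\le C(\gamma^{\mathrm{est}}_{k,n})$ whenever $\zeta_{k,n}\le\sqrt{P_k}$ and the true denominator is used. This follows from $2y a\zeta-y^2D\le a^2\zeta^2/D\le a^2P_k/D$, by maximizing the concave quadratic in $y$. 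Hence $r_{k,n}\le C(\gamma^{\mathrm{est}}_{k,n})$ holds provided $D_{k,n}$ is at least the true interference-plus-noise level.
\end{itemize}

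\emph{Which steps are not, and why verification is needed.} The McCormick relaxations are outer approximations:
\begin{itemize}
\item $\pi_{k'}$ may fall below $S_{k'}P_{k'}$, which underestimates the interference.
\item $w_{k,n}$ may exceed $n_{k,n}r_{k,n}$.
\item The relaxed product $\alpha_{k,n}[\cdot]_+$ can admit $d_{k,n}$ larger than the true bound.
\end{itemize}
A feasible surrogate point is therefore only a candidate. After Hungarian rounding, $S_{k'}\in\{0,1\}$, so the $\pi$ envelope becomes exact once $\pi_{k'}$ is recomputed as $S_{k'}P_{k'}$. What remains is to evaluate \eqref{eq:C1a} directly with the recovered binaries and the continuous values. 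This check is a finite computation, because $\gamma^{\mathrm{est}}$ and the rounded $\alpha$ are known scalars.

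\emph{Termination of the repair loop.} If some pair violates the check, repair sets that $\alpha_{k,n}$ to zero. Two consequences follow:
\begin{itemize}
\item For that pair, \eqref{eq:C1a} holds trivially with $d_{k,n}=0$, and \eqref{eq:C1b} is satisfied.
\item Zeroing an assignment removes interference $S_kP_k$ from every other pair's denominator. By monotonicity of $C$ and of the FBL expression in SINR over the high-SINR region (or after re-solving with the fixed smaller assignment set), the other pairs' bounds do not decrease.
\end{itemize}
Each repair strictly decreases the number of active assignments, which is at most $\min\{K,N\}$. The loop therefore terminates after at most $\min\{K,N\}$ re-solves. In the worst case it ends with the all-zero assignment, which satisfies \eqref{eq:C1a} trivially. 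Every returned point thus satisfies \eqref{eq:C1a}.

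\emph{Main obstacle.} The delicate step is the monotonicity of the FBL right-hand side after a re-solve. The term $nC(\gamma)-\Phi n\sqrt{V(\gamma)/n_{\mathrm p}}$ is not monotone in $\gamma$ for very small $\gamma$. I would avoid that claim and rely only on the finite-decrease argument, because the re-solve rechecks every pair anyway. Termination then follows from counting assignments alone, and feasibility of the output is guaranteed by the final explicit verification.
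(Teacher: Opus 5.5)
Your proposal is correct and follows the same route as the paper's proof. Both argue that the quadratic-transform bound and the $V(\gamma)\le 1$ tightening are conservative while the McCormick envelopes are outer relaxations, so a feasible surrogate point is only a candidate that must be checked against \eqref{eq:C1a} after integer recovery. You go somewhat further than the paper in two ways, neither of which it makes explicit: you note that the $\pi$ envelope can underestimate interference, which voids the premise of the quadratic-transform bound until $S_{k'}$ is binary, and you add a counting argument showing the repair loop terminates.
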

\begin{IEEEproof}
The quadratic transform~\eqref{eq:QT} provides a concave lower bound on
$C(\gamma_{k,n}^{\mathrm{est}})$ that is tight at $y_{k,n}^{(i)}$. The McCormick envelope on $w_{k,n}=n_{k,n}r_{k,n}$ over the
OBBT-tightened box $[n^L,n^U]\times[r^L,r^U]$ is a valid relaxation, tight at the box vertices but not necessarily at interior points. \rev{The uniform bound $V(\gamma)\leq 1$ makes $(\Phi/\sqrt{n_{\mathrm{p}}})\,n_{k,n}$ a valid upper bound on the dispersion penalty of~\eqref{eq:C1a}.} Thus, surrogate feasibility implies feasibility for the modeled PPV surrogate up to the McCormick relaxation gap, which is removed by the post-recovery verification step that evaluates~\eqref{eq:C1a} directly with the recovered integer assignment $\alpha^{*}$ and triggers feasibility repair on violation.
\end{IEEEproof}

Proposition~\ref{prop:fbl_inner} justifies treating~\eqref{eq:C1a_surrogate} as the FBL constraint inside every convex subproblem $Q_t^{(i)}$: the solver may return a point that satisfies the surrogate but not the original PPV bound, so Step~4e of Algorithm~\ref{alg:tcu_sca} re-checks~\eqref{eq:C1a} at the recovered integer assignment and invokes feasibility repair whenever the relaxation gap produces a violation.

\paragraph{\eqref{eq:C4}--\eqref{eq:C4_floor}: Probabilistic Algebraic Connectivity}
These constraints enforce the chance constraint $\Pr[\lambda_2(\mathbf{L})\geq\lambda_{\min}]\geq 1-\delta_c$ on the realized A2A Laplacian $\mathbf{L}$ and its expectation $\Lbar=\mathbb{E}[\mathbf{L}]$, evaluated at the current slot with the index~$t$ suppressed. With confidence parameter $\delta_c\in(0,1)$, we reformulate this requirement in three stages: (i)~replace the random Laplacian by its expected counterpart through auxiliary edge-reliability variables and a per-edge reliability floor; (ii)~impose a deterministic LMI on the expected Laplacian, with the floor and spectral monotonicity yielding the chance constraint via a union bound; (iii)~linearize the Marcum-$Q$ dependence of the per-edge floor on positions and relay power, with post-recovery verification of the original nonlinear condition.

\emph{Step~1 (Auxiliary decoupling).}
Introduce auxiliary edge-reliability variables $w_{nm}$ for $n<m$ with 
$1-\eta \leq w_{nm}\leq \tilde{p}_{nm}$. Since $\lambda_2$ is monotone non-decreasing in the edge weights, this decouples the LMI from the 
Marcum-$Q$ nonlinearity. The lower bound enforces the per-edge reliability floor and is feasible if and only if $\tilde p_{nm}\geq 1-\eta$, which the SCA linearization in Step~3 maintains via its upper bound.

\emph{Step~2 (Linear matrix inequality).}
\rev{The epigraph LMI~\eqref{eq:C4_LMI}, together with its constraint $s \geq \lambda_{\min}$, enforces $\lambda_2(\Lbar(\mathbf{w})) \geq s \geq \lambda_{\min}$.}
The per-edge floor enters as the linear constraint $w_{nm}\geq 1-\eta$ on the auxiliary variables; with $w_{nm}\leq\tilde p_{nm}$ from Step~1, this yields $\tilde p_{nm}\geq 1-\eta$.

\begin{proposition}[Sufficient chance-connectivity condition]
\label{prop:chance_connectivity}
Assume Bernoulli A2A edge outcomes on the candidate set $\mathcal{E}(t)$ with success probabilities
$\{\tilde p_{nm}(t)\}$ each satisfying $\tilde p_{nm}(t) \geq 1-\eta$ with
$\eta \leq \delta_c/|\mathcal{E}(t)|$. If the expected Laplacian satisfies
$\lambda_2(\Lbar(\mathbf{w})) \geq \lambda_{\min}$, then
$\Pr[\lambda_2(\mathbf{L}) \geq \lambda_{\min}] \geq 1-\delta_c$.
\end{proposition}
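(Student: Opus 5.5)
The plan is to show that, on the event that every candidate edge is present, the realized Laplacian dominates the weighted Laplacian $\Lbar(\mathbf{w})$ in the Loewner order. Spectral monotonicity then transfers the LMI bound to the realized graph, and a union bound over edge failures controls the probability of that event.

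\emph{Step 1 (good event).} Define $\mathcal{G}=\bigcap_{(n,m)\in\mathcal{E}(t)}\{a_{nm}(t)=1\}$. By the union bound,
\[
\Pr[\mathcal{G}^c]\le\sum_{(n,m)\in\mathcal{E}(t)}\bigl(1-\tilde p_{nm}(t)\bigr)\le|\mathcal{E}(t)|\,\eta\le\delta_c .
\]
Independence of the Bernoulli outcomes is not needed, only the marginals $\tilde p_{nm}(t)\ge 1-\eta$. Hence $\Pr[\mathcal{G}]\ge 1-\delta_c$.

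\emph{Step 2 (Loewner domination on $\mathcal{G}$).} On $\mathcal{G}$ the realized Laplacian is $\mathbf{L}=\sum_{(n,m)\in\mathcal{E}(t)}\mathbf{E}_{nm}$. The auxiliary Laplacian is $\Lbar(\mathbf{w})=\sum_{(n,m)\in\mathcal{E}(t)}w_{nm}\mathbf{E}_{nm}$, supported on the same candidate set. Step~1 of the reformulation gives $w_{nm}\le\tilde p_{nm}\le 1$. Therefore
\[
\mathbf{L}-\Lbar(\mathbf{w})=\sum_{(n,m)\in\mathcal{E}(t)}(1-w_{nm})\,\mathbf{E}_{nm}\succeq\mathbf{0},
\]
since each $\mathbf{E}_{nm}\succeq\mathbf{0}$ and each coefficient is nonnegative.

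\emph{Step 3 (spectral monotonicity).} Both Laplacians annihilate $\mathbf{1}$. By the Courant--Fischer characterization already used for the LMI~\eqref{eq:C4_LMI},
\[
\lambda_2(\mathbf{L})=\min_{\mathbf{u}\perp\mathbf{1},\|\mathbf{u}\|=1}\mathbf{u}^\top\mathbf{L}\mathbf{u}\;\ge\;\min_{\mathbf{u}\perp\mathbf{1},\|\mathbf{u}\|=1}\mathbf{u}^\top\Lbar(\mathbf{w})\mathbf{u}=\lambda_2(\Lbar(\mathbf{w}))\ge\lambda_{\min}.
\]
Thus $\mathcal{G}\subseteq\{\lambda_2(\mathbf{L})\ge\lambda_{\min}\}$, and the claim follows from Step~1.

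\emph{Main obstacle.} The only delicate point is bookkeeping. The edge weights $\mathbf{w}$ must be supported exactly on $\mathcal{E}(t)$, with $a_{nm}\equiv 0$ and no weight outside it, and they must satisfy $w_{nm}\le 1$. Otherwise the domination in Step~2 fails. I would state explicitly that $w_{nm}\le\tilde p_{nm}\le 1$ via Step~1 of the reformulation.

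I would also remark that the argument is conservative. It requires all candidate edges to succeed, which is stronger than needed, but it is sufficient and matches the union-bound slack $|\mathcal{E}(t)|\eta\le\delta_c$ stated in Section~\ref{subsec:aerial_connectivity}.
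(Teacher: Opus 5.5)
Your proposal is correct and follows the paper's proof essentially step for step. Both arguments bound the all-edges-succeed event $\mathcal{G}$ by a union bound that needs no independence. Both use the Loewner domination $\mathbf{L}-\Lbar(\mathbf{w})=\sum_{(n,m)\in\mathcal{E}(t)}(1-w_{nm})\mathbf{E}_{nm}\succeq\mathbf{0}$ on $\mathcal{G}$. Both then transfer this ordering to $\lambda_2$ on $\mathbf{1}^{\perp}$; you do so via Courant--Fischer where the paper cites Weyl's inequality. Your explicit remark that $w_{nm}\le\tilde p_{nm}\le 1$ is needed for the domination step simply makes visible what the paper uses implicitly.
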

\begin{IEEEproof}
Let $\mathcal{G}$ be the event that all $|\mathcal{E}(t)|$ candidate edges succeed in slot~$t$. A union bound, which requires no independence between edges, gives $\Pr[\mathcal{G}^c]\leq\sum_{(n,m)\in\mathcal{E}(t)}(1-\tilde p_{nm})\leq|\mathcal{E}(t)|\,\eta\leq\delta_c$, so $\Pr[\mathcal{G}]\geq 1-\delta_c$.

On $\mathcal{G}$ the realized Laplacian is the unweighted Laplacian of the candidate graph, $\mathbf{L}=\sum_{(n,m)\in\mathcal{E}(t)}\mathbf{E}_{nm}$, which dominates the expected Laplacian in the positive-semidefinite order: $\mathbf{L}-\Lbar(\mathbf{w})=\sum_{(n,m)\in\mathcal{E}(t)}(1-w_{nm})\,\mathbf{E}_{nm}\succeq\mathbf{0}$. By Weyl's inequality on the subspace orthogonal to $\mathbf{1}$, this ordering passes to the second-smallest eigenvalue, so the Step-2 LMI yields $\lambda_2(\mathbf{L})\geq\lambda_2(\Lbar(\mathbf{w}))\geq\lambda_{\min}$ on $\mathcal{G}$.

Combining the two stages, $\Pr[\lambda_2(\mathbf{L})\geq\lambda_{\min}]\geq\Pr[\mathcal{G}]\geq 1-\delta_c$.
\end{IEEEproof}

Proposition~\ref{prop:chance_connectivity} is what lets the scheduler enforce the probabilistic connectivity requirement~\eqref{eq:C4}--\eqref{eq:C4_floor} with two deterministic, convex-representable conditions: the expected-Laplacian LMI~\eqref{eq:C4_LMI} \rev{with its epigraph constraint $s\ge\lambda_{\min}$} and the per-edge floor $\tilde p_{nm}\ge 1-\eta$. Whenever the parameters satisfy $\eta\le 2\delta_c/[N(N-1)]$, meeting these two conditions inside each subproblem $Q_t^{(i)}$ certifies $\Pr[\lambda_2(\mathbf{L})\ge\lambda_{\min}]\ge 1-\delta_c$ without sampling the random graph, and the floor is re-verified in Step~4e of Algorithm~\ref{alg:tcu_sca}.

\emph{Step~3 (CW-SCA on Marcum-$Q$).}
Let $\ell_{nm}=\|\boldsymbol{\delta}_{nm}\|$ denote the inter-UAV distance, so $\tilde{p}_{nm}$ in~\eqref{eq:marcum} is a function of $\ell_{nm}$ and $P^{\mathrm{relay}}$. We replace this dependence with its first-order Taylor expansion:
\begin{equation}
w_{nm}
\leq
\tilde{p}_{nm}^{(i)}
+
c_{d,nm}^{(i)}
\bigl(\hat{\ell}_{nm}-\ell_{nm}^{(i)}\bigr)
+
c_{P,nm}^{(i)}
\bigl(P^{\mathrm{relay}}-P^{(i)}\bigr),
\label{eq:CW_SCA}
\end{equation}
where $\tilde{p}_{nm}^{(i)}$, $\ell_{nm}^{(i)}$, $P^{(i)}$ and $\boldsymbol{\delta}_{nm}^{(i)}$ are the iterate values of $\tilde{p}_{nm}$, $\ell_{nm}$, $P^{\mathrm{relay}}$ and the inter-UAV separation; $\hat{\ell}_{nm}=(\boldsymbol{\delta}_{nm}^{(i)})^{\top}\boldsymbol{\delta}_{nm}/\ell_{nm}^{(i)}$ is the affine first-order distance approximation; and $c_{d,nm}^{(i)}<0$, $c_{P,nm}^{(i)}>0$ are the Marcum-$Q$ derivatives with respect to distance and relay power at iterate~$i$.

The final \eqref{eq:C4} surrogate at iteration~$i$ couples Steps~1--3: the expected-Laplacian LMI~\eqref{eq:C4_LMI}, the linearized reliability constraint~\eqref{eq:CW_SCA} and the per-edge floor $w_{nm}\geq 1-\eta$ for all $n<m$, jointly convex in $(\mathbf{w},\rev{s,}\mathbf{v}_n,P^{\mathrm{relay}})$.

\paragraph{\eqref{eq:C8}: Collision avoidance}
\eqref{eq:C8} is reverse-convex because it lower-bounds $\|\boldsymbol{\delta}_{nm}\|$. With $\boldsymbol{\delta}_{nm}=\qn(t+1)-\mathbf{q}_m(t+1)$ and $\boldsymbol{\delta}_{nm}^{(i)}$ the displacement at iterate~$i$, the first-order Taylor expansion of the norm at the iterate is a global affine under-estimator, $\|\boldsymbol{\delta}_{nm}\|\geq(\boldsymbol{\delta}_{nm}^{(i)})^{\top}\boldsymbol{\delta}_{nm}/\|\boldsymbol{\delta}_{nm}^{(i)}\|$, yielding the valid convex surrogate
\begin{equation}
\frac{(\boldsymbol{\delta}_{nm}^{(i)})^\top \boldsymbol{\delta}_{nm}}{\|\boldsymbol{\delta}_{nm}^{(i)}\|}
\geq d_{\mathrm{safe}}.
\label{eq:C8_surrogate}
\end{equation}

\paragraph{\eqref{eq:C11}: A2G delivery capacity}
\eqref{eq:C11} is nonconvex because $\beta_{n,\mathrm{FC}}$ depends on $\qn(t+1)$ through an inverse power law, and the right-hand side contains the bilinear product $R_n^{\mathrm{A2G}}\tau_{\mathrm{f},n}$. We linearize $R_n^{\mathrm{A2G}}$ jointly in $\vn$ and $P^{\mathrm{relay}}$ via chain-rule sensitivities, keeping the coupling with $\tau_{\mathrm{f},n}$ as an explicit affine product. With \(\ell_{n,\mathrm{FC}}=\|\boldsymbol{\delta}_{n,\mathrm{FC}}\|\) and \(\boldsymbol{\delta}_{n,\mathrm{FC}}=\qn(t+1)-\mathbf{q}_{\mathrm{FC}}\), the affine distance approximation at iteration~\(i\) is $\hat{\ell}_{n,\mathrm{FC}}=(\boldsymbol{\delta}_{n,\mathrm{FC}}^{(i)})^{\top}\boldsymbol{\delta}_{n,\mathrm{FC}}/\ell_{n,\mathrm{FC}}^{(i)}$. The chain-rule distance sensitivity, frozen at iterate~$i$, is
\begin{equation}
c_{\beta,\mathbf{v}}^{(i)}
=
\left.\frac{\partial R}{\partial\beta}\frac{\partial\beta}{\partial\ell}\right|^{(i)}
=
\frac{-\alpha_p^{\mathrm{air}}\beta_0^{\mathrm{air}}(\ell^{(i)})^{-\alpha_p^{\mathrm{air}}-1}\,B^{\mathrm{A2G}}P^{(i)}}
{\ln 2\,(\sigma_{\mathrm{FC}}^2+P^{(i)}\beta^{(i)})}<0.
\end{equation}

Here $R\triangleq R_n^{\mathrm{A2G}}$ is the A2G rate~\eqref{eq:A2G_rate}, $\beta\triangleq\beta_{n,\mathrm{FC}}$ the A2G path gain~\eqref{eq:path_gains}, $\ell\triangleq\ell_{n,\mathrm{FC}}$ the UAV--FC distance, $P\triangleq P^{\mathrm{relay}}$ the relay power and the superscript $(i)$ denotes evaluation at the current iterate (so $\ell^{(i)}$, $\beta^{(i)}$ and $P^{(i)}$ are the iterate values of $\ell_{n,\mathrm{FC}}$, $\beta_{n,\mathrm{FC}}$ and $P^{\mathrm{relay}}$); $\beta_0^{\mathrm{air}}$ and $\alpha_p^{\mathrm{air}}$ are the aerial reference path gain and path-loss exponent from~\eqref{eq:path_gains}, $B^{\mathrm{A2G}}$ the A2G bandwidth and $\sigma_{\mathrm{FC}}^2$ the FC noise power.
Combining this with $\hat{\ell}_{n,\mathrm{FC}}$ gives the velocity-dependent first-order correction to $R_n^{\mathrm{A2G}}\tau_{\mathrm{f},n}$:
\begin{equation}
\Delta_{\mathbf{v}}^{(i)}
=
c_{\beta,\mathbf{v}}^{(i)}
\tau_{\mathrm{f},n}^{(i)}
\bigl(
\hat{\ell}_{n,\mathrm{FC}}
-
\ell_{n,\mathrm{FC}}^{(i)}
\bigr).
\label{eq:vel_correction}
\end{equation}
The prefactors $c_{\beta,\mathbf{v}}^{(i)}$, $\tau_{\mathrm{f},n}^{(i)}$ and $\ell_{n,\mathrm{FC}}^{(i)}$ are frozen at iterate~$i$, and $\hat{\ell}_{n,\mathrm{FC}}$ is linear in the displacement $\boldsymbol{\delta}_{n,\mathrm{FC}}$, itself affine in $\vn$ through $\qn(t{+}1)=\qn(t)+\vn\delta_t$; hence $\Delta_{\mathbf{v}}^{(i)}$ is affine in $\vn$. The corrected local surrogate at iterate $i$ for \eqref{eq:C11} is
\begin{align}
d_{k,n}
&\leq
R_n^{(i)}\tau_{\mathrm{f},n}
+
\mu_n^{(i)}
\tau_{\mathrm{f},n}^{(i)}
\bigl(P^{\mathrm{relay}}-P^{(i)}\bigr)
\nonumber\\
&\quad
+
\Delta_{\mathbf{v}}^{(i)}
+
M(1-\alpha_{k,n}),
\label{eq:C11_surrogate}
\end{align}
where $R_n^{(i)} \triangleq R_n^{\mathrm{A2G}}|^{(i)}$ is the A2G rate at the current iterate, $\Delta_{\mathbf{v}}^{(i)}$ the velocity correction~\eqref{eq:vel_correction} and $\mu_n^{(i)}=\partial R_n^{\mathrm{A2G}}/\partial P^{\mathrm{relay}}|^{(i)}=B^{\mathrm{A2G}}\beta^{(i)}/[\ln 2\,(\sigma_{\mathrm{FC}}^2+P^{(i)}\beta^{(i)})]$ the relay-power sensitivity at iterate~$i$. This surrogate is locally tight at the current iterate.

\subsection{Structural Constraints and Additional Convex Reformulations}
\label{sec:exact}

The remaining constraints are retained exactly or handled by standard convex reformulations.

\paragraph{\eqref{eq:C1b} (rate-distortion floor)}
Because $B_k^{\min}$ is constant at slot~$t$, the assignment-scaled lower bound $B_k^{\min}\alpha_{k,n} \leq d_{k,n}$ is linear in $(\alpha_{k,n},d_{k,n})$ and therefore forms a valid LP constraint. When \(\alpha_{k,n}=0\), it reduces to \(0\leq d_{k,n}\), satisfied by the domain constraint.

\paragraph{\eqref{eq:C3} (A2G SNR feasibility)}
Substituting~\eqref{eq:path_gains} into the A2G SNR constraint and rearranging yields $P^{\mathrm{relay}}\geq(\gamma^{\mathrm{A2G}}\sigma_{\mathrm{FC}}^2/\beta_0^{\mathrm{air}})\,\|\qn(t{+}1)-\mathbf{q}_{\mathrm{FC}}\|^{\alpha_p^{\mathrm{air}}}$ for all $n$. \rev{After substitution of}~\eqref{eq:kinematics}, the right-hand side is a convex function of $\vn$. For $\alpha_p^{\mathrm{air}}\geq 1$, the epigraph $\{(\mathbf{x},u):\|\mathbf{x}\|^{\alpha_p^{\mathrm{air}}}\leq u\}$ is a power cone for rational $\alpha_p^{\mathrm{air}}$, supported by modern conic solvers.

\paragraph{\eqref{eq:C5}--\eqref{eq:C7}, \eqref{eq:C9}, \eqref{eq:C10} (kinematics, mobility, assignment, time)}
The kinematic update~\eqref{eq:kinematics} substitutes next-slot positions out of all referencing constraints, so~\eqref{eq:C5} is not retained as a separate constraint. The speed limit is an SOC constraint. After the kinematic substitution, the altitude $[\qn(t{+}1)]_3=[\qn(t)]_3+[\vn]_3\delta_t$ is affine in $\vn$, so the corridor $H_{\min}\leq[\qn(t{+}1)]_3\leq H_{\max}$ reduces to two linear constraints. Constraints~\eqref{eq:C9}--\eqref{eq:C10} are linear in the assignment and time-allocation variables. The assignment-gated window floors are likewise jointly linear in $(\alpha_{k,n},\tau_{\mathrm{s},n},\tau_{\mathrm{f},n})$ since $\sum_k\alpha_{k,n}\le1$, so they are retained exactly. Under the isolated relaxation $\alpha_{k,n}\in[0,1]$, the assignment matrix is totally unimodular~\cite{Hoffman1956}; however, coupling with other constraints can break integrality, requiring integer recovery later.

\paragraph{\eqref{eq:C13}--\eqref{eq:C14} (energy budgets)}
\eqref{eq:C13} contains the nonconvex $P_{\mathrm{fly}}(\|\vn\|)$ and the bilinear $P^{\mathrm{relay}}\tau_{\mathrm{f},n}$. We upper-bound the propulsion term via the descent lemma~\cite{Nesterov2018}:
\begin{align}
P_{\mathrm{fly}}(\|\vn\|)
&\leq
P_{\mathrm{fly}}(\|\vn^{(i)}\|)
+
\nabla_{\vn}P_{\mathrm{fly}}(\|\vn^{(i)}\|)^\top (\vn-\vn^{(i)}) \nonumber\\
&\quad
+\frac{L_{\mathrm{fly}}}{2}\|\vn-\vn^{(i)}\|^2,
\label{eq:descent}
\end{align}
where $L_{\mathrm{fly}}$ is a Lipschitz constant~\cite{Nesterov2018} for $\nabla P_{\mathrm{fly}}$ on $\{\|\vn\|\leq V_{\max}\}$, yielding a rotated-SOC representable upper bound. The bilinear $P^{\mathrm{relay}}\tau_{\mathrm{f},n}$ is handled by McCormick envelopes. The trilinear sensor-energy term in~\eqref{eq:C14} is handled by recursive McCormick: introduce $z_{k,n}=\alpha_{k,n}P_k$, then envelope $z_{k,n}\tau_{\mathrm{s},n}$ with OBBT-tightened bounds.

\subsection{Approximate Terminal Coverage}
\label{sec:C12}

The hard horizon-coverage constraint~\eqref{eq:C12} cannot be guaranteed exactly under per-slot decomposition. We replace it with two mechanisms.

\emph{(i) Deficit queue.}
The per-cluster deficit queue $Z_k(t)$ from~\eqref{eq:deficit} is a soft long-run-coverage proxy: $\frac{1}{T}\sum_{t=1}^{T}\widetilde S_k(t)\geq\omega_k-[Z_k(T{+}1)-Z_k(1)]/T$.

\emph{(ii) Starvation threshold.}
Let $\Delta_k^{\mathrm{thresh}}\in\mathbb{N}$ be the per-cluster maximum tolerated AoI before starvation control activates. When $\Delta_k(t)\geq\Delta_k^{\mathrm{thresh}}$, cluster~$k$ is force-assigned to the nearest feasible UAV during integer recovery; if none is available, it stays in the matching pool with a starvation bonus added to its weight.

\subsection{Proposed Algorithm}
\label{sec:matching}

Each per-slot problem is solved using a multi-start SCA scheme: every start solves a sequence of frozen-point convex subproblems, recovers integer assignments by maximum-weight bipartite matching and repairs residual infeasibility, with the best of $J$ starts retained.

The convex subproblem at iterate~$i$ is $Q_t^{(i)}:\;\max_{\mathbf{x}}\,\eqref{eq:augmented_obj}$ s.t.\ $\widetilde{\mathcal{C}}^{(i)}$, where $\widetilde{\mathcal{C}}^{(i)}$ collects the convex surrogates~\eqref{eq:C1a_surrogate}, \eqref{eq:C4_LMI} with \eqref{eq:CW_SCA} and the per-edge floor, \eqref{eq:C8_surrogate}, \eqref{eq:C11_surrogate}, the propulsion bound~\eqref{eq:descent} and the remaining constraints of $\mathsf{(P)}$ in exact LP, SOC and power-cone form.

After the SCA loop converges, the relaxed assignments are rounded to a feasible one-to-one schedule by maximum-weight bipartite matching\rev{. The weight assigned to pairing cluster~$k$ with UAV~$n$ is}
\begin{equation}
W_{k,n}(t)=\rev{g_k(t)+\lambda_Z Z_k(t)+\mu\,\widehat{\Delta\lambda}_2^{(k,n)}(t)},
\label{eq:assignment_weights}
\end{equation}
\rev{where the first term is the service gain of the cluster, the second accounts for its accumulated service deficit scaled by $\lambda_Z$ and the third estimates the change in algebraic connectivity induced by the pairing. The first two follow directly from the system state. To construct the topology-dependent term, let all starred quantities denote values at the SCA-converged point. For a candidate pairing $(k,n)$, UAV~$n$ is assumed to take a speed-limited one-slot step toward the cluster head:}
\begin{equation*}
\rev{\hat{\mathbf{q}}_{n,k}=\qn(t)+\min\{V_{\max}\delta_t,\|\mathbf{q}_k-\qn(t)\|\}\,\frac{\mathbf{q}_k-\qn(t)}{\|\mathbf{q}_k-\qn(t)\|},}
\end{equation*}
\rev{while the other UAVs stay at their converged end-of-slot positions $\mathbf{q}_m^{*}$. The step changes the distance of every incident link from its converged value $\ell_{nm}^{*}$ to $\hat\ell_{nm,k}=\|\hat{\mathbf{q}}_{n,k}-\mathbf{q}_m^{*}\|$, and the resulting first-order change in connectivity is}
\begin{equation*}
\rev{\widehat{\Delta\lambda}_2^{(k,n)}(t)=\sum_{m\neq n}([\vbar_2^{*}]_n-[\vbar_2^{*}]_m)^2\!c_{d,nm}^{*}\!(\hat\ell_{nm,k}-\ell_{nm}^{*}),}
\end{equation*}
\rev{where $\vbar_2^{*}$ is the Fiedler vector of $\Lbar$, with any unit eigenvector from the eigenspace used if $\lambda_2$ is repeated, and $c_{d,nm}^{*}<0$ is the Marcum-$Q$ distance derivative of~\eqref{eq:CW_SCA}.}

\rev{Serving an urgent cluster through a UAV whose displacement preserves or improves the swarm topology receives a high weight. Conversely, moving a UAV that is critical for connectivity toward a distant cluster is penalized, according to the same connectivity price $\mu$ as in~\eqref{eq:augmented_obj}.} The Hungarian algorithm then selects the assignment $\alpha^{*}$ that maximizes the total weight $\sum_{k,n}\alpha_{k,n}W_{k,n}(t)$ subject to the one-to-one constraints~\eqref{eq:C9}.

Algorithm~\ref{alg:tcu_sca} details the steps: state observation and feature computation (Steps~1--2); the multi-start SCA loop, which freezes the iterate quantities before solving $Q_t^{(i)}$ to tolerance $\varepsilon_{\mathrm{SCA}}$ (Step~3); starvation-aware Hungarian recovery with sequential feasibility repair, which zeros the most-violating assignment and re-solves in at most $N$ passes before reverting to a hover bailout (Steps~4a--4e); and execution with state update (Step~5). \rev{Every point recovered by the loop is verified against the original nonlinear constraints in Step~4e of Algorithm~\ref{alg:tcu_sca} and repaired on violation, so all reported results are feasible for the original problem. The loop runs to the tolerance $\varepsilon_{\mathrm{SCA}}$ under the iteration cap $I_{\max}$.}

\begin{algorithm}[!t]
\small
\caption{TCU-SCA: Per-Slot Scheduling}
\label{alg:tcu_sca}
\DontPrintSemicolon
\LinesNumbered
\SetKwInOut{Input}{Input}
\SetKwInOut{Output}{Output}
\Input{System parameters; initial state $\mathcal{S}(1)$; tuning constants $\varepsilon_{\mathrm{SCA}}, \rev{\mu}, \lambda_Z, J, I_{\max}$}
\Output{Actions $\{\mathbf{v}_n^*,\alpha_{k,n}^*,\tau_{\mathrm{s},n}^*,\tau_{\mathrm{f},n}^*,P_k^*,P^{\mathrm{relay},*}\}$ per slot}
\For{$t = 1, \ldots, T$}{
  \textbf{Step 1:} Observe $\mathcal{S}(t)=\{\Delta_k,\mathbf{q}_n,Z_k,E_n^{\mathrm{rem}},E_k^{\mathrm{s,rem}}\}$; compute \rev{$g_k(t)$}, $B_k^{\min}(t)$, pilot estimates $\hat h_{k,n}$ and Jakes-correlated coefficients for $\gamma_{k,n}^{\mathrm{est}}$\;
  \textbf{Step 2:} Initialize $\mathbf{x}^{(0)}$ (warm start or hover)\;
  \textbf{Step 3:} SCA inner loop\;
  \For{$j = 1, \ldots, J$ \textup{(multi-start)}}{
    \For{$i = 1, \ldots, I_{\max}$ \textup{(SCA inner loop)}}{
      Freeze iterate quantities: channel and geometry terms, the CW-SCA coefficients in~\eqref{eq:CW_SCA}, the FBL coefficients in~\eqref{eq:C1a_surrogate} and the A2G correction~\eqref{eq:vel_correction}\;
      \textbf{Solve} $Q_t^{(i)}$ (LP+SOC+exp-cone+power-cone\rev{+LMI}) $\to \mathbf{x}^{(i+1)}$; \textbf{break} if $\|\mathbf{x}^{(i+1)}-\mathbf{x}^{(i)}\|_\infty\leq\varepsilon_{\mathrm{SCA}}$\;
    }
  }
  Retain best run (highest objective)\;
  \textbf{Step 4a:} Force-assign starved clusters ($\Delta_k\geq\Delta_k^{\mathrm{thresh}}$) to the nearest feasible UAV\;
  \textbf{Step 4b:} Recover integer assignment by Hungarian matching with weights \rev{$W_{k,n}=g_k+\lambda_Z Z_k+\mu\widehat{\Delta\lambda}_2^{(k,n)}$ per~\eqref{eq:assignment_weights}} $\to\alpha_{k,n}^*$\;
  \textbf{Step 4c:} Re-solve the continuous variables with $\alpha^*$ fixed\;
  \textbf{Step 4d:} Feasibility repair: remove the most-violating assignment and re-solve ($\leq N$ passes); revert to hover (bailout) on residual infeasibility\;
  \textbf{Step 4e:} Verify the original nonlinear constraints~\eqref{eq:C1a}, \eqref{eq:C11}, \eqref{eq:C13}--\eqref{eq:C14} and $\tilde p_{nm}\geq 1-\eta$ at the recovered $\mathbf{x}^*$; on any violation invoke Step~4d\;
  \textbf{Step 5:} Execute $\mathbf{x}^*$; update positions via~\eqref{eq:kinematics}, AoI via~\eqref{eq:AoI_update}, energies via~\eqref{eq:E_uav_update}--\eqref{eq:E_sensor_update} and deficits via~\eqref{eq:deficit}\;
}
\end{algorithm}

\rev{Because the Jakes correlations~\eqref{eq:jakes} and the geometry-dependent channel gains are frozen within each convex subproblem and the linearization in~\eqref{eq:CW_SCA} is a local approximation rather than a global one-sided bound, the resulting SCA procedure is a heuristic. Consequently, neither monotonic improvement of the objective nor convergence to a stationary point of the original problem is guaranteed.}

\section{Simulation Results and Complexity Analysis}
\label{sec:results}

We compare the performance of the proposed scheduler against five baselines:
\begin{itemize}
    \item \emph{BCD}: block coordinate descent on~\eqref{eq:P}, alternating between assignment, position, power and time-allocation blocks. It shares our objective and constraints, isolating the value of the SCA surrogate. \rev{Blocks are initialized from the previous slot's solution; at $t=1$, hover and mid-power initialization is used. BCD runs for at most $I_{\max}^{\mathrm{BCD}}=8$ rounds and stops early when the objective improves by less than $\varepsilon_{\mathrm{SCA}}$ over a full cycle.}
    \item \emph{AoI-STO}~\cite{LongTVT2024}: uses per-slot Hungarian assignment with Lyapunov drift-plus-penalty weights $W_{k,n}(t)=(V+X_k(t))(\Delta_k(t)+1)|\rho_{k,n}\hat h_{k,n}|$, where $V>0$ is the drift-penalty tradeoff and the virtual queue $X_k(t+1)=[X_k(t)-a_{\max}]^+ + \Delta_k(t+1)$ enforces the long-run average AoI bound $a_{\max}$. Continuous variables are optimized by BCD under the same constraints as~\eqref{eq:P}. \rev{It replaces the proposed estimation-derived urgency and connectivity reward with a topology-agnostic drift-plus-penalty weight. Paired with BCD, it isolates the contribution of the proposed weighting.}
    \item \emph{AoI-only}~\cite{Kadota2018Scheduling}: clusters ranked by $\Delta_k(t)$ and matched by Hungarian assignment. It isolates the contribution of spatial planning.
    \item \emph{Random}: each UAV samples one cluster uniformly per slot, with collisions resolved by leaving conflicting UAVs unassigned.
    \item \emph{Nearest}: each UAV greedily flies toward the closest unassigned cluster, ignoring AoI and topology.
\end{itemize}
The heuristic baselines skip the joint continuous optimization, applying max sensor power, max relay power and a $50/50$ sensing/forwarding split. \rev{All methods, heuristics included, are evaluated using the same reliability-certified physical-layer model. Before any transmission, the scheduled data volume is verified against the pessimistic caps of Step~4e, and every method's deliveries are then decoded using identical channel realizations across methods. Any performance difference between methods therefore reflects their scheduling decisions rather than inconsistencies in physical-layer modeling. The connectivity price $\mu$ is calibrated once, on a pilot run, so that the reward contributes $15\%$ of the objective. This gives $\mu=0.8754$, held fixed across all methods and scenarios.} Each method is evaluated over $24$ seeds.

We consider $N=4$ UAVs collecting from $K=15$ sensor clusters over a $1\,\mathrm{km}\times 1\,\mathrm{km}$ area with the FC at the origin, over $T=100$ slots of duration $\delta_t=1$\,s, unless otherwise stated. Cluster dynamics follow the OU model in~\eqref{eq:OU} with parameters drawn independently as $\theta_k\sim\mathcal{U}[0.01,0.20]$, $\sigma_k\sim\mathcal{U}[0.5,2.0]$, $D_k^{\max}\sim\mathcal{U}[0.2,1.0]$. \rev{The draw is performed once: the realized parameter set $\{\theta_k,\sigma_k,D_k^{\max}\}_{k=1}^{K}$ is held fixed across all seeds and all methods, so seed-to-seed variability reflects channel, noise and deployment randomness rather than source heterogeneity, and every method is evaluated on the identical set of sources. The realized set is listed in Table~\ref{tab:ou_params}.} Remaining parameters are listed in Table~\ref{tab:sim_params}. \rev{The operating point $N=4$, $K=15$ represents a small swarm with an average of $K/N=3.75$ sensor clusters per UAV. Since each UAV serves at most one cluster per slot, this setting requires the scheduler to distribute update opportunities across slots. The load-imbalanced cross and dispersed Y deployments vary demand concentration and spatial geometry, respectively, while keeping $N$ and $K$ fixed. The scaling study in Section~\ref{subsec:scaling} varies both $N$ and $K$ to examine the effect of assignment load beyond this operating point.}

\rev{Mobility-induced channel variations are incorporated as follows. Since each packet includes its own pilot sequence, the channel estimate used for decoding is at most one packet duration old. Over this interval, the channel remains highly correlated, with the Jakes temporal correlation exceeding $0.98$ even at the maximum UAV speed considered. Any residual channel variation is accommodated by the safety margins used in the certification procedure.}

\begin{table*}[!t]
\centering
\caption{Simulation parameters.}
\label{tab:sim_params}
\footnotesize
\renewcommand{\arraystretch}{1.0}
\setlength{\tabcolsep}{3pt}
\begin{tabular}{@{}llr@{\hskip 16pt}llr@{}}
\toprule
Symbol & Description & Value & Symbol & Description & Value \\
\midrule
\multicolumn{3}{@{}l}{\textit{Topology and horizon}} & \multicolumn{3}{@{}l}{\textit{OU sensor model (heterogeneous, $\mathcal{U}$)}} \\
$N$ & Number of UAVs & $4$ & $\theta_k$ & Mean-reversion rate & $[0.01,0.20]$ \\
$K$ & Number of sensor clusters & $15$ & $\sigma_k$ & Diffusion coefficient & $[0.5,2.0]$ \\
$T$ & Horizon length (slots) & $100$ & $D_k^{\max}$ & Distortion tolerance & $[0.2,1.0]$ \\
$\delta_t$ & Slot duration & $1$\,s & $\Delta_k^{\mathrm{thresh}}$ & Starvation threshold (slots) & $40$ \\
$\tau_s^{\min},\tau_f^{\min}$ & Min.\ sensing/forwarding window & $0.05\,\delta_t$ & $B_0,\omega_k$ & Min.\ payload / desired service rate & $32$\,bits, \rev{$N/K\!\approx\!0.27$}/slot \\
\midrule
\multicolumn{3}{@{}l}{\textit{Kinematics and geometry}} & \multicolumn{3}{@{}l}{\textit{Connectivity}} \\
$V_{\max}$ & Max.\ UAV speed & \rev{$5$}\,m/s & $\lambda_{\min}$ & Target algebraic connectivity & $0.5$ \\
\rev{$n_{\mathrm{p}}$} & \rev{S2U packet length} & \rev{$10^{3}$\,sym.} & $R_{\mathrm{c}}$ & A2A candidate radius & $\infty$ (complete) \\
$H_{\min},H_{\max}$ & Altitude bounds & $50,150$\,m & $\delta_c$ & Connectivity confidence & $0.3$ \\
$d_{\mathrm{safe}}$ & Inter-UAV separation & $10$\,m & $\eta$ & Per-edge reliability slack & $0.05$ \\
\midrule
\multicolumn{3}{@{}l}{\textit{Channel and RF}} & \multicolumn{3}{@{}l}{\textit{Propulsion (rotary-wing)}} \\
$f_c$ & Carrier frequency & $2.4$\,GHz & $P_0,P_i$ & Blade profile / induced power & $79.86,88.63$\,W \\
$B^{\mathrm{s}}, B^{\mathrm{A2G}}, B^{\mathrm{A2A}}$ & S2U / A2G / A2A bandwidth & $1,10,1$\,MHz & $U_{\mathrm{tip}}$ & Rotor tip speed & $120$\,m/s \\
$\beta_0^{\mathrm{S2U}},\beta_0^{\mathrm{air}}$ & S2U / A2A--A2G path gain at $1$\,m & $-30,-30$\,dB & $v_0$ & Hover induced velocity & $4.03$\,m/s \\
$\alpha_p^{\mathrm{S2U}},\alpha_p^{\mathrm{air}}$ & S2U / A2A--A2G exponent & $2.2,2.2$ & $d_0,s_a$ & Fuselage drag ratio / rotor solidity & $0.6,0.05$ \\
$N_0$ & Noise spectral density & $-174$\,dBm/Hz & $\rho_a,A$ & Air density / rotor disc area & $1.225$\,kg/m$^3$, $0.503$\,m$^2$ \\
$\kappa^{\mathrm{A2A}}$ & Rician $K$-factor (A2A) & $10$ & \multicolumn{3}{@{}l}{\textit{Algorithm parameters}} \\
$\gamma^{\mathrm{A2A}}, \gamma^{\mathrm{A2G}}$ & SNR thresholds (A2A, A2G) & $35,3$\,dB & \rev{$\mu$} & \rev{Connectivity price} & \rev{$0.8754$ ($0.15$ obj.\ share)} \\
$\varepsilon_{\max}$ & FBL target block-error prob. & $10^{-5}$ & $J,I_{\max}$ & SCA warm starts / max iterations & $4,12$ \\
$P_k^{\min},P_k^{\max}$ & Sensor TX power range & $10^{-3},2$\,W & $\varepsilon_{\mathrm{SCA}}$ & SCA convergence tolerance & $10^{-3}$ \\
$P^{\mathrm{relay}}_{\min},P^{\mathrm{relay}}_{\max}$ & UAV relay power range & $10^{-3},1$\,W & \rev{$z_{\mathrm{own}},z_{\mathrm{int}}$} & \rev{Certification safety margins} & \rev{$5,2$} \\
 & & & $\lambda_Z$ & Coverage-pressure weight & $0.5$ \\
\bottomrule
\end{tabular}

\smallskip
\footnotesize Initial conditions: \rev{$\Delta_k(1)=1$ slot,} $E_n^{\mathrm{rem}}(0)=50$\,kJ, $E_k^{\mathrm{s,rem}}(0)=10$\,kJ.
\end{table*}

\begin{table}[!t]
\centering
\footnotesize
\caption{\rev{Fixed OU source parameter set, drawn once and held across all seeds, methods and scenarios.}}
\label{tab:ou_params}
\renewcommand{\arraystretch}{1.0}
\setlength{\tabcolsep}{4.5pt}
\rev{\begin{tabular}{@{}cccc@{\hskip 14pt}cccc@{}}
\toprule
$k$ & $\theta_k$ & $\sigma_k$ & $D_k^{\max}$ & $k$ & $\theta_k$ & $\sigma_k$ & $D_k^{\max}$ \\
\midrule
1 & 0.048 & 1.651 & 0.446 & 9  & 0.195 & 1.308 & 0.316 \\
2 & 0.148 & 1.466 & 0.228 & 10 & 0.090 & 0.542 & 0.812 \\
3 & 0.178 & 1.140 & 0.950 & 11 & 0.173 & 1.422 & 0.913 \\
4 & 0.039 & 0.624 & 0.690 & 12 & 0.196 & 1.437 & 0.978 \\
5 & 0.028 & 1.235 & 0.707 & 13 & 0.100 & 0.795 & 0.974 \\
6 & 0.149 & 0.573 & 0.818 & 14 & 0.116 & 0.665 & 0.241 \\
7 & 0.124 & 1.842 & 0.603 & 15 & 0.022 & 1.532 & 0.462 \\
8 & 0.036 & 0.746 & 0.277 &    &       &       &       \\
\bottomrule
\end{tabular}}
\end{table}

\begin{table*}[!tb]
\centering
\footnotesize
\caption{\rev{Nominal steady-state statistics over slots $50$--$99$ ($24$ seeds, mean $\pm$ std). $\bar\Delta_k$: mean AoI; $\max_k\Delta_k$: steady-state peak AoI; SR: service success rate; MSE: time-averaged OU estimation error $(\sigma_k^2/2\theta_k)(1-e^{-2\theta_k\Delta_k\delta_t})$; Out./Att.: realized outages over service attempts, totaled across seeds. Best in \textbf{bold}.}}
\label{tab:aoi_summary}
\renewcommand{\arraystretch}{1.0}
\setlength{\tabcolsep}{3pt}
\rev{\begin{tabular}{@{}lccccc@{}}
\toprule
Method & $\bar\Delta_k$ & $\max_k\Delta_k$ & SR & MSE & Out./Att. \\
\midrule
Proposed (SCA) & $\mathbf{3.10 \pm 0.25}$ & $12.9$ & $\mathbf{0.983 \pm 0.017}$ & $\mathbf{2.82 \pm 0.22}$ & $\mathbf{0\,/\,9445}$ \\
BCD            & $3.36 \pm 0.52$ & $15.5$ & $0.929 \pm 0.059$ & $2.93 \pm 0.35$ & $0\,/\,8919$ \\
AoI-STO        & $3.73 \pm 0.24$ & $13.8$ & $0.617 \pm 0.032$ & $3.52 \pm 0.16$ & $0\,/\,5925$ \\
AoI-only       & $3.19 \pm 0.16$ & $\mathbf{10.1}$ & $0.732 \pm 0.037$ & $3.19 \pm 0.11$ & $0\,/\,7030$ \\
Nearest        & $55.66 \pm 0.01$ & $100.0$ & $0.879 \pm 0.020$ & $8.78 \pm 1.43$ & $14\,/\,8456$ \\
Random         & $7.47 \pm 0.63$ & $40.9$ & $0.534 \pm 0.029$ & $4.67 \pm 0.37$ & $38\,/\,5166$ \\
\bottomrule
\end{tabular}}
\end{table*}

\subsection{AoI Dynamics and Success Rate}
\label{subsec:aoi_dynamics}

\rev{Fig.~\ref{fig:aoi_baselines} shows the mean and peak OU estimation MSE of all methods over the $T=100$-slot horizon, together with the cumulative success rate, and Table~\ref{tab:aoi_summary} reports steady-state statistics over slots $50$--$99$ together with the estimation-quality and reliability metrics. The proposed scheduler attains the lowest OU estimation MSE ($2.82\pm0.22$, $11.7\%$ below the strongest heuristic) and the highest service success rate ($0.983\pm0.017$) with no observed violation of the prescribed physical-layer reliability condition across $9445$ service attempts. Its mean AoI ($3.10\pm0.25$) is comparable to that of AoI-only ($3.19\pm0.16$). These results show that, when all methods operate over the same reliability-certified physical layer, similar average freshness does not necessarily imply similar estimation accuracy or delivery reliability. The proposed certification and source-aware scheduling mechanisms improve both metrics with negligible AoI penalty.}

\rev{Among the optimization-based baselines, BCD achieves a mean AoI of $3.36\pm0.52$ and a success rate of $0.929\pm0.059$, using $15$ subproblem solves per slot against $48$ for the proposed method. Granting BCD the same $48$-solve budget does not improve its performance: it converges after $37.7$ solves on average and yields a mean AoI of $3.39\pm0.52$ with the same success rate, so the residual gap is attributable to the alternating structure rather than to the computational budget. AoI-STO performs worse in both metrics, with mean AoI $3.73\pm0.24$ and success rate $0.617$: its topology-agnostic drift-plus-penalty policy frequently selects UAV--cluster pairings that are subsequently rejected by the reliability-certification stage. AoI-only matches the proposed method in mean AoI but initiates substantially fewer service attempts, $7030$ versus $9445$. When its selected UAV--cluster pairing fails the certification check, AoI-only provides no alternative assignment, leaving the UAV idle during that slot. By contrast, the proposed scheduler rematches the UAV to another feasible cluster, thereby avoiding the loss of a potential service opportunity.}

\rev{Random and Nearest are the only methods with realized outages, $38$ and $14$ across the $24$ seeds. Both use maximum transmit power and a fixed $50/50$ split and cannot adapt when UAV motion weakens the A2G link to the FC. Under Nearest, each UAV tends to remain close to a small subset of clusters, while the remaining clusters are visited only when their AoI reaches the starvation threshold. Random lacks both an urgency criterion and a directional mobility objective, resulting in a mean AoI of $7.47$.}

\begin{figure*}[!t]
    \centering
    \subfloat[]{%
        \includegraphics[width=0.4\textwidth,height=5cm]{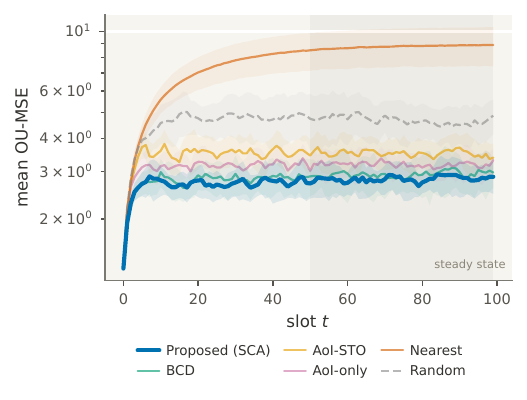}%
        \label{fig:aoi_mean}}
    \hfil
    \subfloat[]{%
        \includegraphics[width=0.4\textwidth,height=5cm]{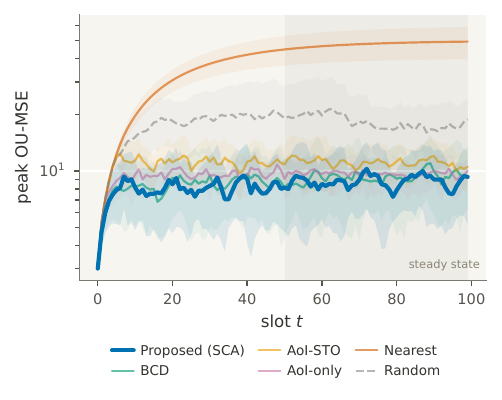}%
        \label{fig:aoi_max}}
    \hfil
    \subfloat[]{%
    \includegraphics[width=0.4\textwidth,height=5cm]{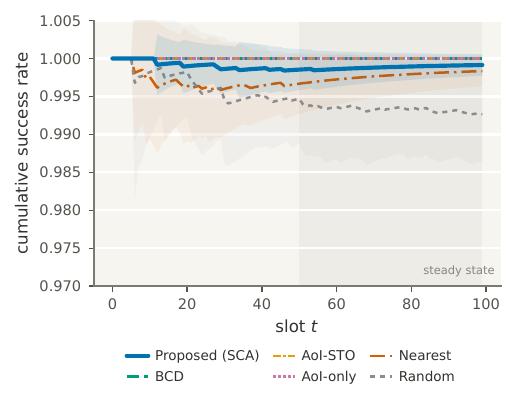}%
        \label{fig:succ_rate}}
    \caption{Baseline comparison over $T=100$ slots, averaged across $24$ seeds: \rev{(a)~mean OU estimation MSE $\tfrac1K\sum_k\sigma_{\xi,k}^2(\Delta_k(t))$; (b)~peak OU estimation MSE $\max_k\sigma_{\xi,k}^2(\Delta_k(t))$;} (c)~cumulative service success rate $\sum_{t'\leq t}\widetilde S_k(t')/\sum_{t'\leq t}S_k(t')$, aggregated across clusters. Shaded bands show $\pm 1$ std.}
    \label{fig:aoi_baselines}
\end{figure*}

\rev{Fig.~\ref{fig:aoi_baselines}(c) shows the success-rate trajectory: the proposed scheduler stabilizes near $0.98$ without transient drops. Certification holds empirically. Across both deployment scenarios and $19{,}040$ proposed-scheduler attempts, and equally for BCD, AoI-STO and AoI-only, not a single delivered update violates the $\varepsilon_{\max}$ target of~\eqref{eq:FBL}. This is the intended effect of the per-packet budget $\varepsilon_{\mathrm{p}}$ and the pilot safety margins: each promised transmission carries enough margin that realized failures stay within the $\varepsilon_{\max}$ budget. In the nominal geometry, the expected algebraic connectivity remains at $\lambda_2=4.00\pm0.00$ throughout the experiment. The candidate graph is complete, and the per-edge reliability constraint~\eqref{eq:C4_floor} is nonbinding, so the connectivity certificate is satisfied at no additional cost. Propulsion energy averages $16.9$\,kJ per UAV over the mission ($33.8\%$ of the onboard budget across the $24$ seeds). Thus, although the energy constraints are enforced throughout all experiments, they remain non-binding at the operating point considered.}

\subsection{Stress Test: Methodology Robustness}
\label{subsec:stress}

\rev{To evaluate robustness to load imbalance and nonuniform spatial geometry, we replace the uniform cluster scatter by a cross-shaped deployment. The clusters are distributed along four radial arms centered at the midpoint of the area, one arm holding three clusters and the other three holding four each, with each cluster placed $300$ to $400$\,m from the center and offset sideways by up to $40$\,m. Every other parameter stays at the Table~\ref{tab:sim_params} operating point: $N=4$, $K=15$, $T=100$, $24$ seeds and all six methods. The difficulty this geometry creates is imbalance. The arms are widely separated and unequally loaded, so a scheduler driven only by the current slot may repeatedly serve clusters on the least costly arms while leaving others unserved for extended periods. The deficit queue is the mechanism that prevents this.}

\begin{table*}[!tb]
\centering
\footnotesize
\caption{\rev{Cross-geometry steady-state statistics over slots $50$--$99$ ($24$ seeds); columns as in Table~\ref{tab:aoi_summary}. Best in \textbf{bold}.}}
\label{tab:stress_summary}
\renewcommand{\arraystretch}{1.0}
\setlength{\tabcolsep}{3pt}
\rev{\begin{tabular}{@{}lccccc@{}}
\toprule
Method & $\bar\Delta_k$ & $\max_k\Delta_k$ & SR & MSE & Out./Att. \\
\midrule
Proposed (SCA) & $2.56 \pm 0.02$ & $7.3$ & $\mathbf{1.000 \pm 0.001}$ & $\mathbf{2.50 \pm 0.01}$ & $\mathbf{0\,/\,9595}$ \\
BCD            & $\mathbf{2.55 \pm 0.01}$ & $7.3$ & $0.998 \pm 0.002$ & $2.50 \pm 0.01$ & $0\,/\,9584$ \\
AoI-STO        & $3.29 \pm 0.15$ & $9.4$ & $0.698 \pm 0.027$ & $3.26 \pm 0.11$ & $0\,/\,6704$ \\
AoI-only       & $2.99 \pm 0.08$ & $8.4$ & $0.777 \pm 0.020$ & $3.06 \pm 0.06$ & $0\,/\,7460$ \\
Nearest        & $55.65 \pm 0.01$ & $100.0$ & $0.890 \pm 0.020$ & $8.84 \pm 1.35$ & $30\,/\,8573$ \\
Random         & $6.11 \pm 0.35$ & $30.7$ & $0.607 \pm 0.019$ & $4.29 \pm 0.22$ & $43\,/\,5874$ \\
\bottomrule
\end{tabular}}
\end{table*}

\rev{Table~\ref{tab:stress_summary} summarizes the steady-state results. The proposed scheduler and BCD both achieve near-ceiling performance, with service success rates of $1.000$ and $0.998$, respectively. Their mean AoI values are nearly identical, at $2.56\pm0.02$ and $2.55\pm0.01$, and their MSE is identical. This close agreement can be attributed to the relatively favorable communication geometry: because the clusters remain close to the area center and the service distances are short, the alternating BCD solver converges to essentially the same operating point as the joint SCA solver. The deficit queue absorbs the heavy-arm imbalance for both methods, with a steady-state peak AoI of $7.3$. The heuristic baselines remain clearly inferior: AoI-STO and AoI-only trail by $0.4$ to $0.7$ slots of mean AoI and $22$ to $30$ success-rate points, and Nearest and Random degrade further, confirming that the cross-shaped deployment stresses load balancing rather than swarm connectivity.}


\subsection{\rev{Connectivity Ablation}}
\label{subsec:ablation}
\rev{In both the uniform and cross-shaped deployments, the candidate graph remains complete and $\lambda_2$ saturates at $N$. Consequently, the connectivity mechanism is enforced without affecting the solution, but its ability to preserve a dispersed topology is not directly tested. To create a connectivity-limited setting, we place the $K$ clusters at the tips of a three-arm Y of arm length $L$ and place the UAVs at the tips as well: at $L=2000$\,m the initial expected connectivity is $\lambda_2=2.67$ and at $L=2400$\,m it is $1.24$, both above the floor $\lambda_{\min}=0.5$ but far from saturation. This geometry creates an explicit trade-off between local service and swarm connectivity. Keeping the UAVs near their respective clusters reduces service distances but weakens the A2A topology, whereas moving the UAVs inward improves connectivity at the cost of increasing their distances from the clusters. The connectivity reward $\mu\bar\lambda_2(t)$ determines how these competing effects are balanced. We run the scheduler twice on the same realization of this geometry, for $T=100$ slots. The first run uses the full objective. The second sets $\mu=0$, which removes the connectivity reward from the objective while keeping the hard connectivity constraint~\eqref{eq:C4} in place.}

\begin{figure}[!t]
    \centering
    \includegraphics[width=0.8\columnwidth]{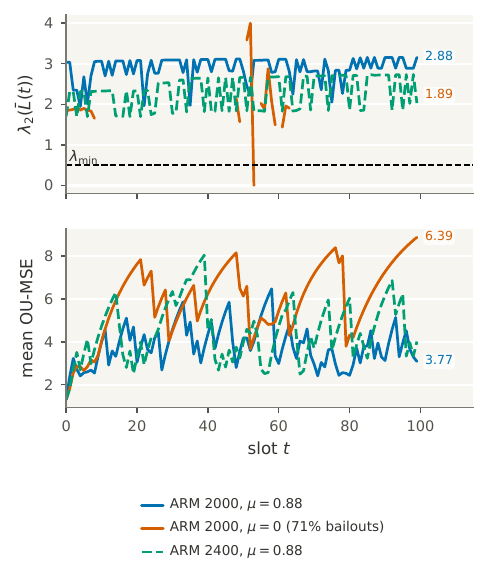}
    \caption{\rev{Fiedler ablation on the dispersed Y geometry, with the UAVs initially at the arm tips. Top: expected connectivity $\lambda_2(\bar L(t))$ with the floor $\lambda_{\min}$ dashed. Bottom: mean OU estimation MSE. Gaps in the $\mu=0$ curves mark bailout slots, in which the solver finds no feasible action and the swarm holds position.}}
    \label{fig:no_fiedler_sweep}
\end{figure}

\rev{Fig.~\ref{fig:no_fiedler_sweep} shows the two runs. With the reward, at $L=2000$\,m the swarm moves inward from the dispersed initial deployment and maintains a mean $\lambda_2$ of $2.88$. Its mean AoI is $5.80$ and its success rate is $0.47$, with no realized outages. The success rate sits below one because distant pairs cannot be certified and are therefore never attempted; every transmission that is actually executed succeeds. At $L=2400$\,m, the connectivity reward increases $\lambda_2$ from its initial value of $1.24$ to a maintained $2.30$, at a mean AoI of $9.57$ and again without a single outage.}

\rev{Removing the reward reverses this behavior. $\lambda_2$ falls below its initial value to a mean of $1.89$; mean AoI rises to $27.82$, corresponding to a $4.8$-fold degradation. The service success rate falls to $0.16$; and one realized outage occurs, the only one in the ablation. Two effects contribute to this degradation. First, without the topology reward, the scheduler no longer steers the swarm away from poorly connected configurations, and the swarm drifts toward the connectivity floor. Second, near the floor, the feasible region defined by the connectivity constraint~\eqref{eq:C4} becomes increasingly restrictive, and $71\%$ of slots end in bailout. The hard constraint keeps every executed action feasible, whereas the connectivity reward maintains sufficient operating margin to avoid repeated bailouts.}

\begin{figure}[!t]
    \centering
    \includegraphics[width=0.8\columnwidth]{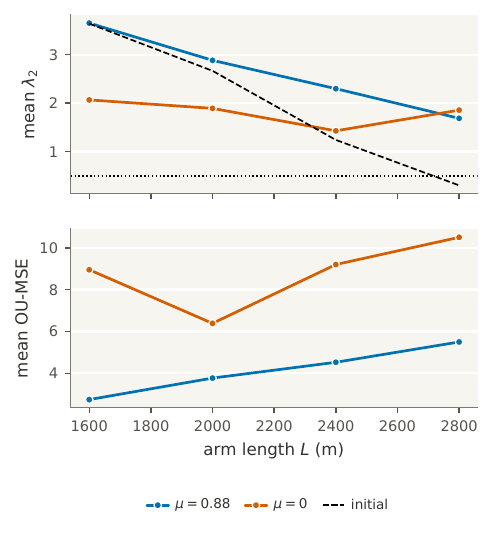}
    \caption{\rev{Arm-length sweep of the Fiedler ablation. Top: mean expected connectivity against its initial value, dashed, and the floor $\lambda_{\min}$, dotted. The $\mu=0$ curve averages $\lambda_2$ over its surviving slots. Bottom: mean OU estimation MSE.}}
    \label{fig:y_sweep}
\end{figure}

\rev{Sweeping the arm length further illustrates this trade-off (Fig.~\ref{fig:y_sweep}). With the reward the swarm holds or restores its margin at every $L\in\{1600,2000,2400,2800\}$\,m: mean $\lambda_2$ stays at $3.65$, $2.88$, $2.30$ and $1.69$ against initial values of $3.64$, $2.67$, $1.24$ and $0.30$, respectively; freshness degrades gracefully from $4.09$ to $15.52$ mean AoI; and no tested arm length produces a single outage or bailout. At $L=2800$\,m the initial deployment itself violates the floor, and the scheduler with the reward recovers the certificate. Without the reward every arm length degrades: mean AoI runs from $28$ to $52$ at success rates of $0.10$ to $0.19$, with $71$ to $86\%$ of slots ending in solver bailout as the deteriorating swarm topology drives the certificate toward infeasibility.}

\subsection{Optimality and Integer-Recovery Quality}
\label{subsec:opt_gap_recovery}

\rev{For $K=N$, the discrete assignment space contains only $N!$ permutations, making exhaustive assignment enumeration tractable for small systems. At $N=K=4$ we build an oracle by enumerating all $24$ permutations. For each permutation the oracle runs the full SCA continuous solve, applies the same repair and Step~4e certification as the deployed pipeline and records the certified per-slot objective. The comparison is therefore between deliverable operating points rather than raw matching weights. Over $T=30$ slots we report three statistics: the median relative gap, the fraction of slots within $0.5\%$ of the oracle and the exact-permutation match rate. The match statistic is strict: a slot counts as a match only when the returned permutation is identical to the oracle's, even if a different permutation attains the same objective.}

\begin{figure}[!t]
    \centering
    \includegraphics[width=0.9\columnwidth]{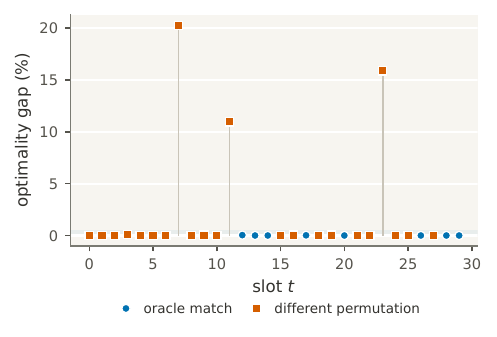}
    \caption{\rev{Per-slot optimality gap of the recovered assignment against exhaustive enumeration at $N=K=4$ ($T=30$). Markers distinguish slots where the recovered permutation matches the oracle from slots where a different permutation attains an equal or near-equal certified objective.}}
    \label{fig:assignment_quality}
\end{figure}

\rev{Fig.~\ref{fig:assignment_quality} reports the result. The median gap is $0.00\%$, and $27$ of $30$ slots lie within $0.5\%$ of the oracle. The exact-permutation match rate is only $26.7\%$. This low match rate results from ties. At $K=N$ several different pairings of UAVs to clusters often achieve exactly the same certified objective, so the oracle's best permutation is not unique. When the pipeline returns a different permutation with the same objective value, the match statistic counts a miss despite causing no performance loss. The three genuine misses reach gaps of $11$ to $20\%$, and the mean gap over all slots is $1.6\%$. These misses occur in isolated slots where a deep channel fade leaves the oracle's pairing as the only certifiable one. Hungarian matching returns the exact maximum-weight assignment at $\mathcal{O}(\max(K,N)^3)$ cost and is adopted for integer recovery throughout.}


\subsection{\rev{Convergence and }Complexity Analysis}
\label{subsec:complexity}

\begin{figure}[!t]
    \centering
    \includegraphics[width=0.8\columnwidth]{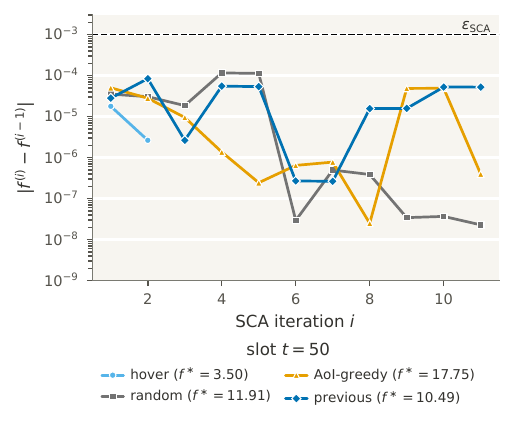}
    \caption{\rev{Per-iteration change of the surrogate objective for each warm start at a representative mid-mission slot, against the stopping tolerance $\varepsilon_{\mathrm{SCA}}$. The legend lists the converged objective of each start.}}
    \label{fig:convergence}
\end{figure}

The per-slot cost is dominated by the SCA inner loop. Each iterate solves a mixed-cone program with $n_{\mathrm{var}}=\mathcal{O}(KN+N^2)$ variables, $\mathcal{O}(KN)$ conic constraints and one $N\times N$ LMI from~\eqref{eq:C4}, yielding an interior-point cost of $\mathcal{O}(n_{\mathrm{var}}^{3.5}\log(1/\epsilon))$ plus the LMI contribution $\mathcal{O}(N^{6.5})$. With $J$ warm starts and at most $I_{\max}$ SCA iterations, the total per-slot complexity is
\[
\mathcal{O}\!\left(JI_{\max}\bigl[(KN)^{3.5}+N^{6.5}\bigr]\log(1/\epsilon)\right).
\]
Table~\ref{tab:complexity} summarizes the per-slot complexity and empirical
wall-time share. \rev{On an Apple M-series laptop with MOSEK and $4$ parallel missions, the proposed scheduler averages $9.9$\,s/slot at $N=4$, $K=15$ against $3.9$\,s/slot for BCD, a $2.5\times$ overhead for the joint optimization at its $48$-solve budget versus BCD's $15$. Convergence is fast in practice: Fig.~\ref{fig:convergence} shows the convergence behavior at a representative mid-mission slot using four initializations: hover, a random assignment, an AoI-greedy assignment, and the previous slot’s solution. For every initialization, the per-iteration change of the objective falls below $\varepsilon_{\mathrm{SCA}}=10^{-3}$ within one to three effective iterations, so most of the $I_{\max}=12$ budget serves as a safety margin. The converged objectives of the four starts differ by up to $5\times$. This sensitivity motivates the multistart strategy: although each run stabilizes quickly, the attained solution depends strongly on the initialization, and the algorithm retains the best certified solution.}

\begin{table}[!htbp]
\centering
\caption{Per-slot complexity and empirical wall-time share at $N=4$, $K=15$.}
\label{tab:complexity}
\footnotesize
\renewcommand{\arraystretch}{1.0}
\setlength{\tabcolsep}{4pt}
\begin{tabular}{@{}lcc@{}}
\toprule
Component & Complexity & Share \\
\midrule
SCA inner loop & $\mathcal{O}\!\left(JI_{\max}[(KN)^{3.5}+N^{6.5}]\log(1/\epsilon)\right)$ & $\sim\!97\%$ \\
Repair & $\mathcal{O}\!\left(N(KN)^{3.5}\log(1/\epsilon)\right)$ & $\sim\!2\%$ \\
State update & $\mathcal{O}(KN)$ & $<\!1\%$ \\
Hungarian matching & $\mathcal{O}(\max(K,N)^3)$ & $<\!0.1\%$ \\
\midrule
\textbf{Total} & $\mathcal{O}\!\left(JI_{\max}[(KN)^{3.5}+N^{6.5}]\log(1/\epsilon)\right)$ & \rev{$\sim\!9.9$\,s/slot} \\
\bottomrule
\end{tabular}
\end{table}

\subsection{Dynamic Scenario: UAV Failure}
\label{subsec:dynamic}

\rev{To assess behavior under mid-horizon swarm reconfiguration, we run the proposed scheduler on the nominal scenario over $T=100$ slots and remove one UAV at $t=50$, leaving three survivors to absorb the load. The planner is failure-aware: from the failure slot on, the lost UAV is excluded from the assignment, from the candidate set and from the connectivity certificate, which is evaluated on the surviving subgraph. The desired service rates $\omega_k$ are kept at their nominal values, so the deficit queue continues to operate against the original coverage target.}

\begin{figure}[!tb]
    \centering
    \includegraphics[width=0.8\columnwidth]{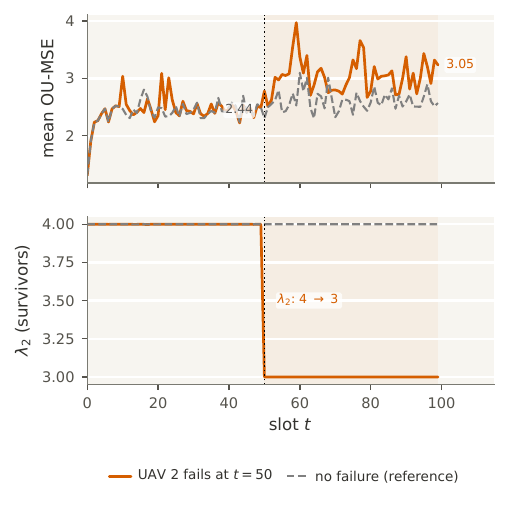}
    \caption{\rev{Mean OU estimation MSE (top) and expected connectivity of the surviving subgraph (bottom) as one of four UAVs is removed at $t=50$, against the no-failure reference.}}
    \label{fig:uav_failure_trajectory}
\end{figure}

\begin{figure}[!t]
    \centering
    \includegraphics[width=1\columnwidth,height=8cm]{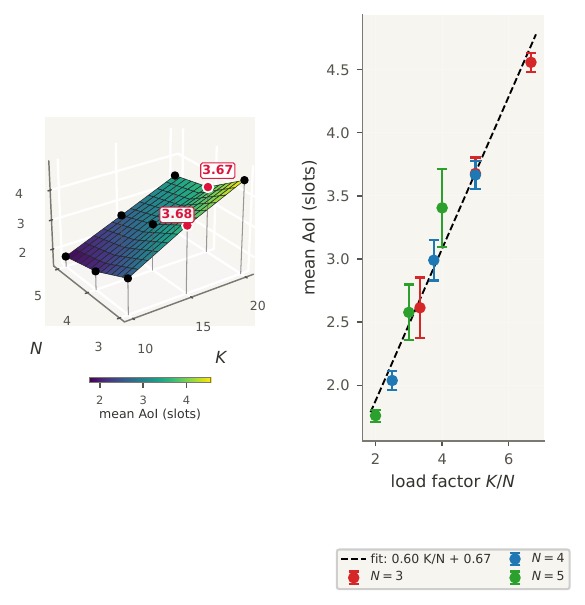}
    \caption{\rev{Scaling with swarm and cluster count. Left: steady-state mean AoI over $N\in\{3,4,5\}$ and $K\in\{10,15,20\}$. Markers show the nine measured configurations, and the two configurations at equal load $K/N=5$ are highlighted. Right: the same nine configurations against the load factor $K/N$ with $\pm 1$ std bars.}}
    \label{fig:nk_surface}
\end{figure}

\rev{Fig.~\ref{fig:uav_failure_trajectory} traces the transition. Before the failure, steady-state mean AoI is $2.93$. Losing one of the four UAVs removes a quarter of the service capacity and lifts mean AoI to a post-failure average of $3.82$. Over the final ten slots it drifts mildly to $4.03$, because the coverage target was set for four UAVs and the deficit queue now spreads it over three. The expected connectivity steps from $\lambda_2=4$ to $\lambda_2=3$, the value for the complete graph on the three survivors, and stays there. No slot ends in bailout and no realized outage occurs through the transition. The surviving UAVs take over the failed UAV's clusters: as these clusters accumulate service deficits, their scheduling priorities increase, enabling the handover without any cluster reaching its starvation threshold.}


\subsection{\rev{Scaling with Swarm and Cluster Count}}
\label{subsec:scaling}
\rev{To characterize performance beyond the studied operating point, we sweep $N\in\{3,4,5\}$ against $K\in\{10,15,20\}$, yielding nine configurations. Each configuration is evaluated over $T=100$ slots and $4$ seeds. The sweep stays inside the certified scope of Proposition~\ref{prop:chance_connectivity}: the per-edge floor follows the union bound, with $\eta=0.05$ at $N\in\{3,4\}$ and $\eta=0.03$ at $N=5$. Fig.~\ref{fig:nk_surface} shows that the steady-state mean AoI varies monotonically with both $K$ and $N$: it increases as the number of clusters $K$ grows and decreases as the number of UAVs $N$ increases. Across all nine configurations, the mean AoI is determined primarily by the ratio $K/N$, which represents the per-slot assignment load. When plotted against $K/N$, the results from all configurations fall on a single line. Configurations with equal load are statistically indistinguishable: at $K/N=5$, the configuration $N=3$, $K=15$ gives $3.68\pm0.13$ and the configuration $N=4$, $K=20$ gives $3.67\pm0.11$. These results support the selected operating point and indicate that information freshness is governed by the assignment load rather than by the absolute fleet size. The certified success rate is at least $0.997$ in every configuration. A single realized outage occurs across the roughly ten thousand attempts.}

\section{Conclusion}
\label{sec:conclusion}

This paper \rev{develops a unified framework for jointly optimizing remote-estimation accuracy}, finite-blocklength link reliability and probabilistic swarm connectivity in UAV-swarm IoT data collection. \rev{The finite-horizon formulation minimizes the accumulated estimation error of the monitored processes, offset by a weighted connectivity reward,} over cluster--UAV assignment, UAV motion, time allocation and transmit power, subject to finite-blocklength reliability, probabilistic connectivity and energy constraints. The resulting nonconvex program is solved per slot by successive convex approximation with Hungarian integer recovery and post-recovery verification.

\rev{Across the nominal and load-imbalanced deployments, the proposed scheduler attains the lowest estimation error and the highest service success rate while matching the mean AoI of the best AoI heuristic. No delivered update violates the reliability target in more than $19{,}000$ certified service attempts. The median assignment optimality gap against exhaustive enumeration is $0.00\%$. In a dispersed deployment where the connectivity constraint is binding, removing the connectivity reward allows the swarm topology to deteriorate and increases the mean AoI by a factor of $4.8$. The hard connectivity constraint prevents connectivity-infeasible actions from being executed, whereas the connectivity reward maintains the operating margin required for sustained service.}

The main limitation is computational complexity: with an average runtime of approximately \rev{$10$\,s} per slot, the proposed scheduler is more suitable as an offline planner or labeling oracle than as a real-time controller. Future work will focus on amortizing this cost through imitation or deep reinforcement learning and tightening the chance-connectivity certificate using matrix-concentration bounds on the expected Laplacian\rev{, thereby improving scalability while preserving probabilistic guarantees}.

\bibliographystyle{IEEEtran}
\bibliography{tcu-references}

\end{document}